\newtheorem{theorem}{Theorem}
\newtheorem{corollary}[theorem]{Corollary}
\newtheorem{lemma}[theorem]{Lemma}
\newcommand{\R}{\mathbb R}
\newcommand{\E }{\mathbb E}
\renewcommand{\d}{{d}}
\begin{document}

\title{A mathematical framework for exact milestoning}
\author{David Aristoff\thanks{Department of Mathematics, Colorado State University, Fort Collins, CO}, Juan M. Bello-Rivas\thanks{Institute for Computational Engineering and Sciences, University of Texas at Austin, Austin, TX}, Ron Elber\thanks{Institute for Computational Engineering and Sciences, Department of Chemistry, University of Texas at Austin, Austin, TX}}
\date{}
\maketitle

\begin{abstract}
We give a mathematical framework for Exact Milestoning, a recently introduced algorithm for mapping a continuous time stochastic process into a Markov chain or semi-Markov process that can be efficiently simulated and analyzed. We generalize the setting of Exact Milestoning and give explicit error bounds for the error in the Milestoning equation for mean first passage times.
\end{abstract}




\section{Introduction}

Molecular Dynamics (MD) simulations, in which classical equations of motions are solved for molecular systems of significant complexity, have proven useful for interpreting and understanding many chemical and biological phenomena (for textbooks see~\cite{Schlick2010,Frenkel2002,Allen1989}). However, a significant limitation of MD is of time scales. Many molecular processes of interest occur on time scales significantly longer than the temporal scales accessible to straightforward simulations. For example, permeation of molecules through membranes can take hours~\cite{Cardenas2012} while MD is usually restricted to the microseconds time scale. One approach to extend simulation times is to use faster hardware~\cite{Shaw2008,Stone2007,Ruymgaart2011}. Other approaches focus on developing theories and algorithms for long time phenomena. Most of the emphasis has been on methodologies for activated processes with a single dominant barrier, as in Transition Path Sampling~\cite{Dellago2002,Bolhuis2002,Dellago1998}. Approaches for dynamics on rough energy landscapes, and for more general and/or diffusive dynamics, have also been developed~\cite{Metzner2009,Sarich2010,Chodera2007,Swenson2014,Moroni2004}. The techniques of Exact Milestoning~\cite{Bello-Rivas2015} and Milestoning~\cite{Faradjian2004} belong to the last category. They are theories and algorithms to accelerate trajectory calculations of kinetics and thermodynamics in complex molecular systems. The acceleration is based on the use of a large number of short trajectories instead of complete trajectories between reactants and products (Figure~\ref{fig:basic-figure}). The simulation of short trajectories is trivial to implement in parallel, making the formulation efficient to use on modern computing resources. Moreover, the use of short trajectories makes it possible to enhance sampling of improbable but important events by initiating the short trajectories near bottlenecks of reactions. A challenge is how to start the short trajectories, and how to analyze the result to obtain correct long time behavior.

While Milestoning is an approximate procedure, it shares the same philosophy and core algorithm as the Exact Milestoning approach. In both algorithms the phase space $\Omega$ is partitioned by hypersurfaces, which we call milestones 
$M \subset \Omega$, into cells. The short trajectories are initiated on milestones and are terminated the first time they reach a neighboring milestone (Figure~\ref{fig:basic-figure}). The short trajectories can 
be simulated in parallel. 

Milestoning uses an approximate distribution for the initial conditions of the trajectories at the hypersurfaces. The results are then analyzed within the Milestoning theory. The approximation is typically the (normalized) canonical distribution restricted to the 
milestone interface $M$. In Exact Milestoning the distribution of hitting points at the interface is estimated numerically by iteratively computing trajectory fragments between milestones. In a straightforward implementation of the iterations 
(see also~\cite{Bello-Rivas2015}) the final phase points of trajectories that were terminated on one milestone are continued until they hit another milestone. This type of trajectory continuation procedure is also used in Non-Equilibrium
Umbrella Sampling~\cite{Warmflash2007} and Trajectory Tilting~\cite{Vanden-Eijnden2009}. The continuation does not mean that full trajectories from reactants to products are computed. The calculations stop when the stationary distribution at the interface, or observables of interest, converge. In practice, and depending of course on the initial guess, the calculation ends significantly earlier than 
complete trajectories from $R$ to $P$ are 
computed. The fast convergence of 
the iterations leads to significant computational savings.

A number of other algorithms build on the use of short trajectories to estimate long time kinetics by ``patching'' these short trajectories at milestones or interfaces. These technologies include the Weighted Ensemble (WE)
~\cite{Zhang2010,Huber1996}, Transition Interface Sampling (TIS)~\cite{van-Erp}, Partial Path Transition Interface Sampling (PPTIS)~\cite{Moroni2004}, Forward Flux Sampling (FFS)~\cite{Allen2006}, Non-Equilibrium Umbrella Sampling (NEUS)~\cite{Warmflash2007}, Trajectory Tilting~\cite{Vanden-Eijnden2009}, and Boxed Molecular Dynamics (BMD)
~\cite{Glowacki2011}. Some of these techniques are similar; however, many subtle differences remain. Some of the differences are as follows. WE is the only method that makes it necessary to use stochastic dynamics. The trajectory sampling in NEUS, Trajectory Tilting and Exact Milestoning is similar, even though the theories are quite different. Exact Milestoning 
allows for the calculations of all the moments of the first passage time~\cite{Bello-Rivas2015}, a result which is not available for other technologies. Boxed Molecular Dynamics, Milestoning and PPTIS are approximate methods leading to greater efficiency. TIS, PPTIS, FFS, and BMD are focused on one-dimensional reaction 
coordinates. Other technologies (e.g. WE, Milestoning, NEUS, and Trajectory Tilting) focus on a space of one or several coarse variables.  

Hence, the overall scopes of these techniques differ significantly, which make direct comparison between them less obvious. We have compared in the past the accuracy and efficiency of the methods of Milestoning and Exact Milestoning with Forward Flux~\cite{Bello-Rivas2015,Cardenas2012_2}. Forward Flux is one of the closest algorithms (in one dimension) to Milestoning and Exact Milestoning. Numerous examples of kinetics of molecular systems studied with Milestoning were published
~\cite{Cardenas2012,Cardenas2013,Kirmizialtin2012,
Jas2012,Kreuzer2012,Elber2010,Elber2009,West2007,Elber2007}. We have also discussed extensively the features of alternate technologies that exploit trajectory fragments
~\cite{West2007,Majek2010}.

The Milestoning theory has not yet been subject to rigorous mathematical analysis, which is the goal of the present manuscript. In this manuscript we show that the Exact Milestoning method can be derived and analyzed in the framework of probability theory. The result is a useful link between physical intuition and a more formal approach. Readers that are interested in the efficiency of the algorithm on concrete examples, and comparison to other technologies, are referred to the sources mentioned the above paragraph.

\begin{figure}[ht]
  \centering
  \includegraphics[width=0.6\linewidth]{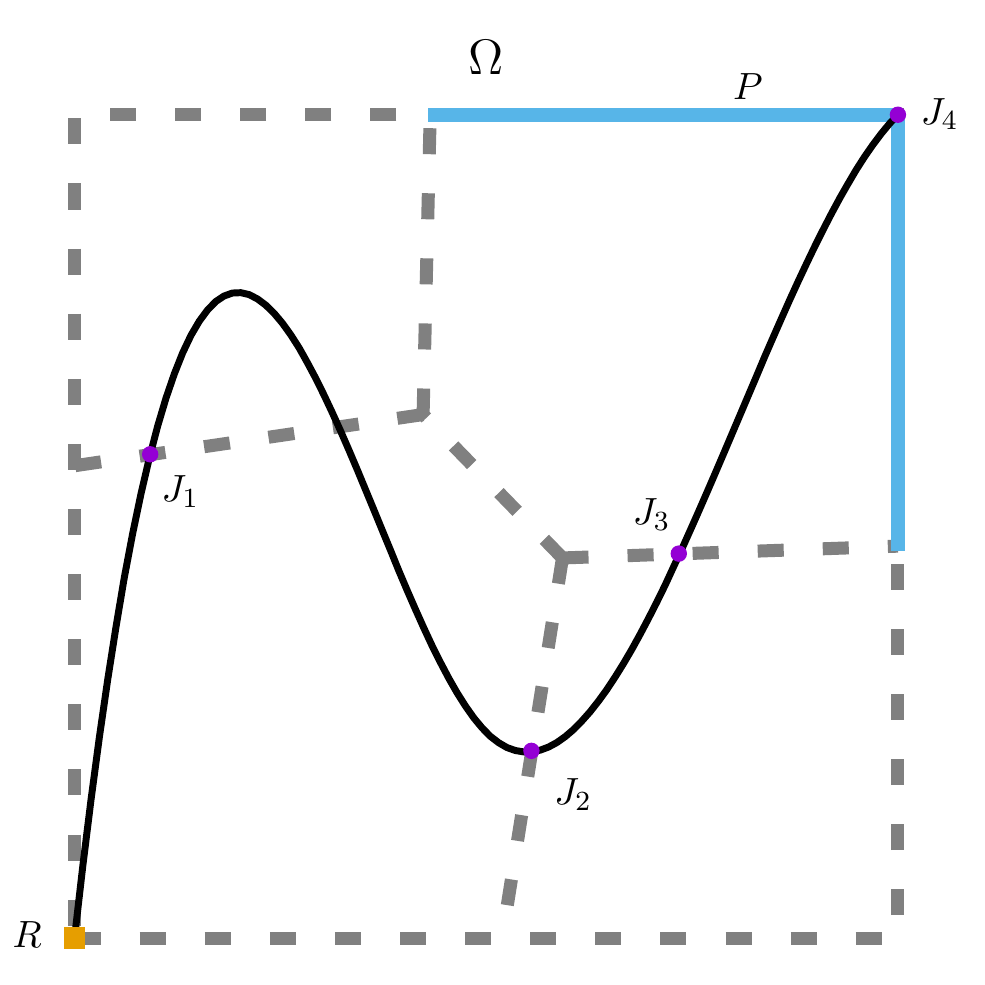}
  \caption{Representation of the state space $\Omega$ and
  the milestones. Each milestone is one 
  of the line segments
traced by dashed grey lines. The reactant state $R$ is highlighted as a square dot in the left-bottom corner while the product state $P$ is comprised of the two line segments shown in blue at the upper-right corner.  A particular realization of a long trajectory appears as a continuous black line and the corresponding values of $(J_n)$ are marked with dots.}
  \label{fig:basic-figure}
\end{figure}

This article is organized as follows. In
Section~\ref{sec:setup}, we describe the
setting for Exact Milestoning and
introduce notation used throughout.
In Section~\ref{sec:stationary},
we show existence of and convergence
to a stationary flux under very general
conditions.
In Section~\ref{sec:algorithm} we state
precisely the Exact Milestoning algorithm~\cite{Bello-Rivas2015}.
In Section~\ref{sec:error},
we establish conditions under which
convergence to the stationary flux is
consistent in the presence of numerical
error (Lemma~\ref{geom_ergodic} and
Theorem~\ref{feller}), and we give a natural upper
bound for the numerical error arising in
Exact Milestoning (Theorem~\ref{errors}).
Finally, in Section~\ref{sec:examples} we consider some
instructive examples.

\section{Setup and notation}\label{sec:setup}

\subsection{The dynamics and MFPT}

In Milestoning we spatially
coarse-grain a dynamics $(X_t)$. 
{The basic 
idea is to stop and start 
trajectories on certain interfaces, 
called milestones, and then 
reconstruct functions of $(X_t)$ using these short 
trajectories, which 
can be efficiently simulated 
in parallel. We assume 
here the dynamics is 
stochastic, and focus 
on using Milestoning for the
efficient computation of}
{\em mean first passage times} 
{(MFPTs)
of $(X_t)$, although similar ideas 
can be used to compute other 
non-equilibrium quantities.}

{To make our arguments we need 
some assumptions on 
$(X_t)$.
We let $(X_t)$ be a 
time homogeneous strong
Markov process with c\`adl\`ag paths
taking values in
a standard Borel space $\Omega$. 
These assumptions allow us to
stop and restart $(X_t)$ on the 
milestones 
without knowing its history.
In applications, usually 
$(X_t)$ is Langevin or overdamped 
Langevin dynamics, and $\Omega$ 
is a subset of Euclidean space.}

We write
${\mathbb P}$, ${\mathbb E}$
for all probability measures
and expectations, with superscripts 
${\mathbb P}^x$ (resp. ${\mathbb P}^\xi$) 
to indicate a starting point $x$ 
(resp. distribution $\xi$). 
The symbol $\sim$ will indicate equality in 
probability law.
We will use 
the words {\em distribution} and 
{\em probability measure} interchangeably.
Total variation norm will be 
denoted by $\|\cdot\|_{TV}$. 
Our analysis below will mostly 
take place in an idealized 
setting where we assume infinite 
sampling on the 
milestones. In this setting, 
distributions are smooth (if 
state space is continuous) and 
the total variation norm is 
the appropriate one. 

Recall we are interested in computing
the MFPT of $(X_t)$
from a reactant set $R$
to a product set $P$. Throughout we
consider fixed disjoint product and
reactant sets $P,R \subset \Omega$.
When $R$ is not a single point, we
will start $(X_t)$ from a fixed 
probability measure $\rho$ on $R$. 
{If $R$ is a single point, 
$\rho = \delta_R$, the delta 
distribution at $R$.} As 
discussed above, 
Milestoning allows for an efficient
computation of the MFPT of $(X_t)$
to $P$, starting at $\rho$. It 
is useful to think of $P$ as a sink,
and $R$ as a source for $(X_t)$.
More precisely, {\em we assume that when
$(X_t)$ reaches $P$, it immediately
restarts on $R$ according to $\rho$.}
Obviously, this 
assumption has no effect on the MFPT
to $P$.  It will be useful, 
however, for
computational and
theoretical considerations.

Many of the results below 
follow from well-known theorems in probability 
theory. However, because of 
the special source-sink structure of $(X_t)$, 
simpler proofs are often available, 
and we include them for clarity 
and completeness.

\subsection{The milestones and semi-Markov
viewpoint}
We write $M \subset \Omega$ for 
the space of milestones used 
for parallelizing the computation 
of the MFPT. 
Each point $x \in M$ belongs to a
{\em milestone} $M_x \subset M$. Thus, 
$M$ is the union of all the milestones. 
We assume there are 
finitely many milestones, 
each of which is a closed
set. Moreover, 
we demand that $(X_t)$ passes through 
the intersection of two milestones 
with probability $0$ -- thus, 
$(X_t)$ can only cross one 
milestone at a time. This 
can be accomplished for Langevin 
or overdamped Langevin dynamics
by taking the 
milestones to be codimension 1 
with pairwise intersections of 
codimension 2 or larger; see Figure~\ref{fig:basic-figure}.
The sets $P$ and $R$ will be two
of the milestones. We always 
start $(X_t)$ on $M$.

By following the sequence 
of milestones crossed by $(X_t)$, 
we obtain a sequence of points $(J_n)$ 
in $M$. See Figure~\ref{fig:basic-figure}. 
We now describe $(J_n)$ more precisely. 
Let $\theta_n$ be the $n$th milestone 
crossing time for $(X_t)$, defined 
recursively by $\theta_0 = 0$ and 
\begin{equation*}
\hbox{if }X_{\theta_n} = x,
\hbox{ then } \theta_{n+1} :=
\inf\{t > \theta_n\,:\, X_t \in M_y\text{ for some } M_y\ne M_x\}.
\end{equation*}
Note that by a milestone 
crossing, we mean a crossing of 
a milestone different from the 
previous one. The sequence of milestone crossing points is $J_n = X_{\theta_n}$.

We show now that $(X_t)$ can be 
partially reconstructed 
from $(J_n)$ and $(\theta_n)$. 
Let $(Y_t)$ be 
defined\footnote{When $(Y_t)$ has a probability density, it corresponds to the
density $p(x,t)$ 
from~\cite{Bello-Rivas2015} 
for the
last milestone point passed.} by setting $Y_t = J_n$ 
whenever $\theta_n \le t < \theta_{n+1}$. 
Then $(X_t)$ and $(Y_t)$ 
agree at each milestone 
crossing time $t=\theta_n$, $n=0,1,2,\ldots$ 
and 
$(Y_t)$ is obtained from $(X_t)$ by throwing 
away the path of $(X_t)$ between 
milestone crossings, keeping only the endpoints. 
It follows that $(X_t)$ and $(Y_t)$ have the same
MFPT to $P$. Thus, for our purposes 
it is enough to study $(Y_t)$. We 
note that $(Y_t)$, like $(X_t)$,  
immediately restarts at $\rho$ upon 
reaching $P$.

By our assumptions above, $(J_n)$ 
is a Markov chain on $M$, and $(Y_t)$ 
is a semi-Markov process on $M$, meaning 
it has the Markov property at 
crossing times. We 
write $K(x,dy)$ for the transition 
kernel of $(J_n)$. Thus, if 
the initial distribution of $(J_n)$ 
is $J_0 \sim \xi$, 
then the distribution at time 
$n$ is ${\mathbb P}^\xi(J_n \in \cdot) 
= \xi K^n$. We also write 
$\xi K^n f := {\mathbb E}^\xi[f(J_n)]$ 
and $\xi f := \int_M f(x) \,\xi(dx)$
for suitable functions $f$.

The following
notation will be needed.
For $x \in M$, define local  
first passage times
\begin{equation*}
\tau_M^x = \inf\{t > 0\,:\, Y_t \in M_y \text{ for some }M_y \ne M_x\}.
\end{equation*}
Thus, $\tau_M^x$ is the first time for $(Y_t)$ to
cross some milestone other than $M_x$,
starting at $Y_0 = x$. In particular,  
if $X_{\theta_{n-1}} = x$, 
then $\theta_n \sim \theta_{n-1} + \tau_M^x$.
We also define $\tau_P$ to be 
the first time to cross $P$ 
and $\sigma_P$ the number of 
crossings before reaching $P$,
\begin{equation*}
\tau_P = \inf\{t > 0\,:\, Y_t \in P\}, 
\qquad \sigma_P =  \min\{n \ge 0\,:\,
J_{n} \in P\}.
\end{equation*}
We 
are interested in
${\mathbb E}^\rho[\tau_P]$, 
the MFPT from $\rho$ to $P$.

\section{Invariant measure and MFPT}\label{sec:stationary}

\subsection{Stationary distribution on the milestones}

The MFPT 
will be estimated via short 
trajectories between milestones. 
An important ingredient is the 
correct starting 
distribution for these trajectories. 
Exact Milestoning 
makes use of a {\em stationary flux} of
$(X_t)$ on the milestones, 
which corresponds\footnote{Our $\mu$ is the same as the
appropriately normalized
stationary flux $q$ in other
Milestoning papers. 
We use $\mu$ 
instead of $q$ to emphasize that here it 
is a probability measure, not a density.} to the 
stationary distribution $\mu$ of 
$(J_n)$. It is worth noting that 
Milestoning can also be made
exact by choosing 
milestones as 
isocommittor 
surfaces~\cite{Vanden-Eijnden2008}. 
The advantage of the formulation 
here is that the milestones can 
be arbitrary.

Some assumption is required to guarantee
the existence of a stationary flux.
We adopt the following 
sufficient condition, which we assume
holds throughout:
\begin{equation*}
{\mathbb E}^\xi [\tau_P] \text{ and }
{\mathbb E}^\xi [\sigma_P] \text{ are finite
for all probability measures }\xi 
\text{ on }M.
\end{equation*}
This ensures that $(Y_t)$ reaches 
$P$ in finite expected time and 
does not have infinitely many 
milestone crossings in finite time. 
The condition
can be readily verified in the
standard settings for milestoning
discussed above. Using this assumption 
and the source-sink structure of the 
dynamics -- namely, that $(Y_t)$ immediately 
restarts at $\rho$ upon reaching $P$ -- 
we show in Theorem~\ref{mu} 
below that $\mu$ exists. 
\begin{theorem}\label{mu}
$(J_n)$ has an
invariant probability
measure $\mu$ defined by
\begin{equation*}
\mu(\cdot) := {\mathbb E}^{\rho}\left[\sum_{n=0}^{\sigma_P} {\mathbbm{1}}_{\{J_n \in\, \cdot\}}\right] {\mathbb E}^\rho[\sigma_P+1]^{-1}.
\end{equation*}
where ${\mathbbm{1}}_{\{J_n \in C\}} = 1$ 
if $J_n \in C$ and otherwise ${\mathbbm{1}}_{\{J_n \in C\}}=0$.
\end{theorem}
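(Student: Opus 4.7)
The plan is to recognize the formula for $\mu$ as the classical occupation measure of $(J_n)$ over one regeneration cycle. Thanks to the source-sink convention, every visit of $(J_n)$ to $P$ is immediately followed by a fresh start from $\rho$, so the trajectory naturally splits into i.i.d.\ excursions of length $\sigma_P+1$. Standard renewal-type reasoning then suggests that the time-averaged occupation over one such excursion, normalized by its expected length, is invariant for $(J_n)$.

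First, $\mu$ is a well-defined probability measure on $M$: the standing hypothesis gives $\mathbb{E}^\rho[\sigma_P+1]<\infty$, so the denominator is positive and finite, and setting $C=M$ in the numerator yields $\mathbb{E}^\rho[\sigma_P+1]$, so $\mu(M)=1$.

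For invariance, I would introduce the unnormalized occupation measure
\[
\nu(C) := \mathbb{E}^\rho\!\left[\sum_{n=0}^{\sigma_P}\mathbbm{1}_{\{J_n\in C\}}\right]
\]
and show $\nu K = \nu$. By Tonelli and the definition of $K$,
\[
\nu K(C) = \sum_{n=0}^\infty \mathbb{E}^\rho\!\left[\mathbbm{1}_{\{n\le\sigma_P\}}\,K(J_n,C)\right].
\]
Since $\sigma_P$ is a stopping time for $(J_n)$, the event $\{n\le\sigma_P\}$ is $\mathcal{F}_n$-measurable, so the Markov property lets me replace $K(J_n,C)$ by the conditional probability of $\{J_{n+1}\in C\}$ given $\mathcal{F}_n$ and collapse the conditional expectation. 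After shifting $m=n+1$ this gives
\[
\nu K(C) = \mathbb{E}^\rho\!\left[\sum_{m=1}^{\sigma_P+1}\mathbbm{1}_{\{J_m\in C\}}\right].
\]
Splitting off the $n=0$ term from $\nu(C)$ and the $m=\sigma_P+1$ term from $\nu K(C)$, the middle sums $\sum_{m=1}^{\sigma_P}\mathbbm{1}_{\{J_m\in C\}}$ agree, leaving
\[
\nu(C)-\nu K(C) = \mathbb{E}^\rho[\mathbbm{1}_{\{J_0\in C\}}] - \mathbb{E}^\rho[\mathbbm{1}_{\{J_{\sigma_P+1}\in C\}}].
\]
Both terms equal $\rho(C)$, hence the difference vanishes, and dividing by $\mathbb{E}^\rho[\sigma_P+1]$ gives $\mu K=\mu$.

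The main obstacle, and the place where the source-sink structure actually enters, is justifying $J_{\sigma_P+1}\sim\rho$. The argument is that by the standing convention $(X_t)$ restarts from a $\rho$-distributed point of $R$ immediately upon reaching $P$; since $R$ and $P$ lie on distinct milestones, the first milestone crossing after $\theta_{\sigma_P}$ is precisely this restart point. Equivalently, $K(x,\cdot)=\rho(\cdot)$ for every $x\in P$, and applying this at time $n=\sigma_P$ together with the strong Markov property gives $J_{\sigma_P+1}\sim\rho$. Once this structural fact is pinned down, the telescoping cancellation above is routine.
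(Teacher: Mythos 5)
Your proof is correct and follows essentially the same route as the paper's: both define the unnormalized occupation measure $\nu(C) = {\mathbb E}^{\rho}\bigl[\sum_{n=0}^{\sigma_P} {\mathbbm 1}_{\{J_n \in C\}}\bigr]$ over one source-to-sink excursion and verify $\nu K = \nu$ using the restart rule $K(x,\cdot) = \rho(\cdot)$ for $x \in P$, then normalize by $\nu(M) = {\mathbb E}^\rho[\sigma_P+1]$. The only difference is organizational: the paper splits into the cases $C \cap R = \emptyset$ and $C \subset R$, whereas you obtain the same cancellation in one stroke via the telescoping identity $\nu(C) - \nu K(C) = {\mathbb P}^\rho(J_0 \in C) - {\mathbb P}^\rho(J_{\sigma_P+1} \in C) = \rho(C) - \rho(C) = 0$, which is a valid (and arguably cleaner) reorganization of the same argument.
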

\begin{proof}
Define $\nu(\cdot) = {\mathbb E}^{\rho}\left[\sum_{n=0}^{\sigma_P} {\mathbbm{1}}_{\{J_n \in\, \cdot\}}\right]$ and observe that
\begin{align*}
\nu(\cdot)
= \sum_{n=0}^\infty \sum_{m=0}^{n}
{\mathbb P}^{\rho}(J_m \in \cdot\,|\, \sigma_P = n){\mathbb P}^\rho(\sigma_P = n)= \sum_{n=0}^\infty {\mathbb P}^\rho(J_n \in \cdot,\, \sigma_P \ge n).
\end{align*}
If $C \cap R = \emptyset$, by bounded convergence,
\begin{align*}
\int_M \nu(dx)K(x,C) =
\sum_{n=0}^\infty
{\mathbb P}^\rho(J_{n+1} \in C,\, \sigma_P \ge n)
= \sum_{n=0}^\infty
{\mathbb P}^\rho(J_{n} \in C,\, \sigma_P \ge n) = \nu(C),
\end{align*}
where the second equality
uses ${\mathbb P}^\rho(J_0 \in C) = 0$ and
 $J_{n+1} \notin R
\Rightarrow \sigma_P \ne n$.
If $C \subset R$,
\begin{align*}
\int_M \nu(dx)K(x,C)
&= \sum_{n=0}^\infty
{\mathbb P}^\rho(J_{n+1} \in C,\, \sigma_P = n) + \sum_{n=0}^\infty{\mathbb P}^\rho(J_{n+1} \in C,\, \sigma_P \ge n+1) \\
&= \rho(C) - {\mathbb P}^\rho(J_0 \in C, \sigma_P \ge 0) + \sum_{n=0}^\infty{\mathbb P}^\rho(J_{n} \in C,\, \sigma_P \ge n) = \nu(C).
\end{align*}
\end{proof}

We will show below that $(J_n)$ 
converges to $\mu$ under appropriate 
conditions. In that case $\mu$ 
is unique and we will call
$\mu$ the {\em stationary distribution} 
of $(J_n)$.
A successful application of 
Exact Milestoning 
will require some technique 
for sampling $\mu$. 
The algorithm we present (Algorithm 1 
below) is based on convergence of 
the distribution of $(J_n)$ to $\mu$ 
in total variation. We 
demonstrate this convergence in 
Theorem~\ref{conv} under an additional 
assumption on $(J_n)$. 

It is worth noting that 
the proof of Theorem~\ref{mu} leads to
the following representation of $\mu$
as a Neumann series. The 
representation is given in Corollary~\ref{neumann} below. 
This representation
can be used, in principle, to sample $\mu$ 
without additional assumptions on $(J_n)$.
The Neumann series is written 
in terms of the 
{\em transient kernel}
\begin{equation}\label{trans}
{\bar K}(x,dy) = \begin{cases} K(x,dy), & x \notin P\\
0, & x \in P\end{cases}.
\end{equation}
${\bar K}(x,dy)$  
corresponds to a modified version of $(J_n)$ that 
is absorbed (killed) on $P$.

\begin{corollary}\label{neumann}
We have
\begin{equation}\label{neumann_c}
\lim_{n \to \infty} \|\nu(M)^{-1}\sum_{i=0}^{n-1} \rho {\bar K}^i - \mu\|_{TV} = 0.
\end{equation}
\end{corollary}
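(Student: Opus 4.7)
The plan is to identify the full measure $\nu$ from the proof of Theorem~\ref{mu} as the Neumann series $\sum_{i=0}^\infty \rho \bar K^i$, and then read off the corollary as convergence of the partial sums of a summable series of non-negative measures. The only subtlety is notational: one must be careful to distinguish the source-sink chain $(J_n)$ driven by $K$ (which restarts at $\rho$ on reaching $P$) from the killed sub-Markov iteration by $\bar K$, and show that $\rho \bar K^i$ coincides with the restriction of the law of $J_i$ to the survival event $\{\sigma_P \ge i\}$.

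First, I would prove by induction on $i$ that
\[
\rho\bar K^i(C) = {\mathbb P}^\rho(J_i \in C,\, \sigma_P \ge i)
\]
for every measurable $C \subset M$. The base case is immediate because $\rho$ is supported in $R$, so $\sigma_P \ge 0$ almost surely and $\rho\bar K^0 = \rho = {\mathbb P}^\rho(J_0 \in \cdot)$. For the inductive step, use the Markov property of $(J_n)$ together with the definition $\bar K(x,\cdot) = K(x,\cdot){\mathbbm{1}}_{\{x \notin P\}}$: the event $\{J_i \notin P,\, \sigma_P \ge i\}$ is exactly $\{\sigma_P \ge i+1\}$, so propagating one more step through $\bar K$ picks up precisely the next survival event. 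Combining this identity with the intermediate formula
\[
\nu(C) = \sum_{n=0}^\infty {\mathbb P}^\rho(J_n \in C,\, \sigma_P \ge n)
\]
already established in the proof of Theorem~\ref{mu} yields $\nu = \sum_{i=0}^\infty \rho \bar K^i$ as measures on $M$. In particular $\nu(M) = \sum_{i=0}^\infty {\mathbb P}^\rho(\sigma_P \ge i) = {\mathbb E}^\rho[\sigma_P + 1]$, which is finite by the standing hypothesis.

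The rest is routine. Since $\nu - \sum_{i=0}^{n-1} \rho \bar K^i = \sum_{i=n}^\infty \rho \bar K^i$ is a non-negative measure, its total variation norm equals its total mass, so
\[
\Bigl\|\nu - \sum_{i=0}^{n-1} \rho\bar K^i \Bigr\|_{TV} = \sum_{i=n}^\infty {\mathbb P}^\rho(\sigma_P \ge i),
\]
which is the tail of a convergent series and therefore tends to $0$. Dividing by the positive constant $\nu(M)$ gives $\|\nu(M)^{-1}\sum_{i=0}^{n-1}\rho \bar K^i - \mu\|_{TV} \to 0$ as required. There is no substantive obstacle beyond the bookkeeping in the induction; everything else reduces to the non-negativity of $\rho \bar K^i$ and the summability built into the standing assumption ${\mathbb E}^\rho[\sigma_P] < \infty$.
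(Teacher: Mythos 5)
Your proof is correct and follows essentially the same route as the paper: identify $\nu$ with the Neumann series $\sum_{i\ge 0}\rho{\bar K}^i$, bound the tail in total variation by $\sum_{i\ge n}{\mathbb P}^\rho(\sigma_P\ge i)$, and conclude from ${\mathbb E}^\rho[\sigma_P]<\infty$. The only difference is that you spell out the induction showing $\rho{\bar K}^i(\cdot)={\mathbb P}^\rho(J_i\in\cdot,\,\sigma_P\ge i)$, a step the paper asserts without proof.
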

\begin{proof}
Recall that $\mu = \nu/\nu(M)$ where
\begin{equation*}
\nu(\cdot) = \sum_{n=0}^\infty
{\mathbb P}^\rho(J_n \in \cdot,\, \sigma_P \ge n) = \sum_{n=0}^\infty \rho {\bar K}^n.
\end{equation*}
Moreover,
\begin{equation}\label{neumann_error}
\sup_{|f| \le 1} \left|\nu(M)^{-1}\sum_{i=0}^{n-1} \
\rho {\bar K}^i f - \mu f\right|
 \le \nu(M)^{-1}\sum_{i=n}^\infty{\mathbb P}^\rho(\sigma_P \ge i),
\end{equation}
and the right hand side of~\eqref{neumann_error} is summable since by assumption
${\mathbb E}^\rho[\sigma_P] < \infty$.
\end{proof}

\subsection{Milestoning equation for
the MFPT}\label{milestoning_eq}

Equipped with an invariant
measure $\mu$, we are now able to state the Milestoning equation~\eqref{main_eq} for the MFPT. In Exact Milestoning, this equation
is used to efficiently compute
the MFPT. The algorithm is 
based on two principles: first, 
many trajectories can be simulated 
in parallel to estimate $\tau_M^x$ 
for various $x$; and second, 
the stationary distribution $\mu$ 
can be efficiently estimated through 
a technique based on power 
iteration. See the right hand 
side of equation~\eqref{main_eq} below. 

The gain in efficiency comes 
from the fact that the trajectories 
used to estimate $\tau_M^x$ are 
much shorter than trajectories 
from $R$ to $P$. Whether we can 
efficiently sample $\mu$ may
depend somewhat on whether we have 
a good initial guess. 
When $(X_t)$ is Langevin dynamics, 
we have found in some cases 
the canonical Gibbs 
distribution is a sufficiently 
good guess.
See~\cite{Bello-Rivas2015} and~\cite{Bello-Rivas2015b} for details and discussion.

\begin{theorem}\label{milestoning}
Let $\mu$ be defined as above. Then $\mu(P)> 0$ and
\begin{equation}\label{main_eq}
\mu(P){\mathbb E}^\rho[\tau_P] = \int_M \mu(dx){\mathbb E}^x[\tau_M^x] := {\mathbb E}^\mu[\tau_M].
\end{equation}
\end{theorem}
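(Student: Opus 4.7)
The plan is to extract both claims from the explicit formula for $\mu$ in Theorem~\ref{mu} combined with the strong Markov property of $(X_t)$ at the crossing times $\theta_n$. The key observation is that the sum defining $\nu$ in Theorem~\ref{mu} has exactly one term in $P$ (at $n=\sigma_P$), so when $\tau_M^{J_n}$ is summed against $\nu$ and strong Markov is applied, the increments $\theta_{n+1}-\theta_n$ telescope to $\tau_P$ once the instantaneous restart on $R$ is accounted for.

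For the first claim, by definition of $\sigma_P$, $J_n\notin P$ for $n<\sigma_P$ and $J_{\sigma_P}\in P$, so $\sum_{n=0}^{\sigma_P}{\mathbbm{1}}_{\{J_n\in P\}}=1$ almost surely, giving
\[
\mu(P)={\mathbb E}^\rho[\sigma_P+1]^{-1},
\]
which is strictly positive by the standing assumption ${\mathbb E}^\rho[\sigma_P]<\infty$.

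For the main identity, I would unfold the definition of $\mu$:
\[
{\mathbb E}^\mu[\tau_M]
={\mathbb E}^\rho[\sigma_P+1]^{-1}\,
{\mathbb E}^\rho\!\left[\sum_{n=0}^{\sigma_P}{\mathbb E}^{J_n}[\tau_M^{J_n}]\right].
\]
Applying the strong Markov property of $(X_t)$ at the stopping time $\theta_n$ yields ${\mathbb E}^{J_n}[\tau_M^{J_n}]={\mathbb E}^\rho[\theta_{n+1}-\theta_n\,|\,{\mathcal F}_{\theta_n}]$, and since $\{\sigma_P\ge n\}\in{\mathcal F}_{\theta_n}$, pulling the indicator inside the conditional expectation and telescoping produces
\[
{\mathbb E}^\rho\!\left[\sum_{n=0}^{\sigma_P}{\mathbb E}^{J_n}[\tau_M^{J_n}]\right]
={\mathbb E}^\rho[\theta_{\sigma_P+1}].
\]

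The only subtlety, and the main conceptual obstacle, is identifying $\theta_{\sigma_P+1}$ with $\tau_P$. Since $J_{\sigma_P}\in P$ triggers the instantaneous restart at $\rho$, the c\`adl\`ag path lies in $R\ne P$ for all $t>\theta_{\sigma_P}$ sufficiently close to $\theta_{\sigma_P}$, so $\theta_{\sigma_P+1}=\theta_{\sigma_P}=\tau_P$; equivalently, $\tau_M^x=0$ whenever $x\in P$, so the $n=\sigma_P$ term vanishes and the earlier increments telescope to $\theta_{\sigma_P}=\tau_P$. Combining the three displays yields the Milestoning equation. The hard part is really just handling the restart convention carefully; the rest is bookkeeping around the strong Markov property.
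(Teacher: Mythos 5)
Your proof is correct, but it takes a genuinely different route from the paper's. The paper never touches the explicit formula for $\mu$ in the main identity: it performs a one-step decomposition ${\mathbb E}^x[\tau_P] = {\mathbb E}^x[\tau_M^x] + \int_{M\setminus P}{\mathbb E}^y[\tau_P]\,K(x,dy)$, integrates against $\mu$, and uses only the invariance $\mu K = \mu$ to cancel the common term $\int_{M\setminus P}\mu(dy){\mathbb E}^y[\tau_P]$ from both sides, leaving ${\mathbb E}^\mu[\tau_M] = \int_P\mu(dy){\mathbb E}^y[\tau_P] = \mu(P){\mathbb E}^\rho[\tau_P]$. You instead exploit the regeneration-cycle representation of $\mu$ from Theorem~\ref{mu} and run a Wald-type telescoping of the increments $\theta_{n+1}-\theta_n$ over one cycle from $\rho$ to $P$. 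Each approach has something to recommend it: the paper's argument is shorter and applies verbatim to any invariant probability measure (not just the one constructed in Theorem~\ref{mu}), but its cancellation step silently requires $\int_{M\setminus P}\mu(dy){\mathbb E}^y[\tau_P]<\infty$ (guaranteed by the standing assumption); your argument works entirely with nonnegative quantities via Tonelli, needs no subtraction, and yields the sharper statement $\mu(P)={\mathbb E}^\rho[\sigma_P+1]^{-1}$ where the paper only asserts positivity. Both proofs ultimately lean on the same instantaneous-restart convention at $P$ --- the paper through the identity ${\mathbb E}^y[\tau_P]={\mathbb E}^\rho[\tau_P]$ for $y\in P$, you through $\tau_M^x=0$ for $x\in P$ (equivalently $\theta_{\sigma_P+1}=\theta_{\sigma_P}$) --- and you are right that this is the one point requiring care; your handling of it is at least as explicit as the paper's. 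The only cosmetic quibble is the phrase ``the path lies in $R$ for all $t>\theta_{\sigma_P}$ sufficiently close to $\theta_{\sigma_P}$,'' which is not literally true of a diffusion restarted on a hypersurface; the clean statement is the one you give in the alternative, namely that the $n=\sigma_P$ term of the sum vanishes because $\tau_M^x=0$ on $P$, so the telescoping stops at $\theta_{\sigma_P}=\tau_P$.
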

\begin{proof}
The assumption
${\mathbb E}^\rho[\sigma_P] < \infty$
shows that $\mu(P) > 0$.
For any $x \in M$,
\begin{align*}
{\mathbb E}^x[\tau_P]
&= \int_M {\mathbb E}^x\left[\tau_P\,|\,
Y_{\tau_M^x} = y\right]K(x,dy) \\
&= \int_M {\mathbb E}^x\left[\tau_M^x\,|\,
Y_{\tau_M^x} = y\right]K(x,dy)
+ \int_{M\setminus P} {\mathbb E}^x\left[\tau_P - \tau_M^x\,|\,
Y_{\tau_M^x} = y\right]K(x,dy)\\
&= {\mathbb E}^x[\tau_M^x]
+ \int_{M\setminus P} {\mathbb E}^y[\tau_P]
K(x,dy).
\end{align*}
Thus,
\begin{equation*}
{\mathbb E}^{\mu}[\tau_P]
= {\mathbb E}^{\mu}[\tau_M]
+ \int_{M\setminus P} \int_M
\mu(dx){\mathbb E}^y[\tau_P]K(x,dy)= {\mathbb E}^\mu[\tau_M]
+ \int_{M\setminus P} \mu(dy){\mathbb E}^y[\tau_P],
\end{equation*}
and so
\begin{equation*}
{\mathbb E}^\mu[\tau_M] = \int_P \mu(dy){\mathbb E}^y[\tau_P] = \mu(P){\mathbb E}^\rho[\tau_P].
\end{equation*}
\end{proof}

In Section~\ref{sec:algorithm}
below we present the Exact Milestoning 
algorithm (Algorithm 1) 
recently used in~\cite{Bello-Rivas2015} 
and~\cite{Bello-Rivas2015b}. The algorithm 
uses a technique which combines 
coarse-graining and power iteration 
to sample $\mu$. Consistency 
of power iteration algorithms 
are justified via 
Theorem~\ref{conv} below, where 
we show $\xi K^n \to \mu$ as 
$n\to \infty$. Though we emphasize
that there are a range of possibilities
for sampling~$\mu$ (for 
example, algorithms based 
on~\eqref{neumann_c} or~\eqref{erg} 
below) 
we note that Algorithm 1 
was shown
to be efficient for
computing the MFPT in the
entropic barrier example
of~\cite{Bello-Rivas2015} and
the random energy landscapes
example of~\cite{Bello-Rivas2015b}.

\subsection{Convergence to stationarity}
\label{sec:conv}

In this section we justify 
the consistency of 
power iteration-based 
methods for sampling $\mu$ 
by showing that $\xi K^n$ 
converges to $\mu$ in total 
variation norm as $n\to \infty$. 
The theorem requires an extra assumption -- 
aperiodicity of the jump chain $(J_n)$.

\begin{theorem}\label{conv}
Suppose that $(J_n)$ is aperiodic
in the following sense:
\begin{equation}\label{nonlattice}
\hbox{g.c.d.}\, \{n \ge 1\,:\,{\mathbb P}^\rho(\sigma_P = n-1) > 0\} = 1.
\end{equation}
Then for all probability measures $\xi$
on $M$,
\begin{equation}\label{ergodic}
\lim_{n\to \infty} \|{\mathbb P}^\xi(J_n \in \cdot) - \mu\|_{TV} \equiv \lim_{n\to \infty} \|\xi K^n -\mu\|_{TV} = 0.
\end{equation}
In particular, $\mu$ is unique.
\end{theorem}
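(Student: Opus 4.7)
The plan is to exploit the regenerative structure forced on $(J_n)$ by the source--sink convention. Since $K(x,\cdot) = \rho$ for every $x \in P$, each visit of the chain to $P$ is followed one step later by a fresh draw from $\rho$. Starting from $\rho$, the successive times at which the chain is $\rho$-distributed form a renewal process whose inter-arrival law is that of $\sigma_P + 1$; this law has finite mean $m := \E^\rho[\sigma_P] + 1$ by the standing hypothesis, and its support has greatest common divisor $1$ by the aperiodicity assumption~\eqref{nonlattice}.

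The heart of the proof is a coupling of two chains $(J_n)$ and $(J_n')$ with $J_0 \sim \xi$ and $J_0' \sim \mu$. Run them independently until each has hit $P$ at least once; this happens in a.s.\ finite time because $\E^\xi[\sigma_P] < \infty$ and $\E^\mu[\sigma_P] < \infty$. From that point on, each chain generates a renewal sequence whose inter-arrivals are i.i.d.\ copies of $\sigma_P + 1$. By the classical coupling theorem for (delayed) renewal processes with aperiodic, finite-mean inter-arrival law, a joint law can be constructed under which the two renewal sequences share a common epoch at some a.s.\ finite random time $T^\ast$; this is where the aperiodicity hypothesis enters essentially. At step $T^\ast + 1$, both chains restart from $\rho$, and we couple them to use the same $\rho$-sample and the same subsequent transitions, so that $J_n = J_n'$ for every $n > T^\ast$.

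By the coupling inequality and the invariance $\mu K^n = \mu$,
\begin{equation*}
\|\xi K^n - \mu\|_{TV} = \|\xi K^n - \mu K^n\|_{TV} \le 2\, \PP(T^\ast \ge n) \longrightarrow 0.
\end{equation*}
Uniqueness of $\mu$ is then immediate: any other invariant probability measure $\mu'$ satisfies $\mu' = \mu' K^n \to \mu$ in total variation, forcing $\mu' = \mu$. The main obstacle is the construction of the renewal coupling, which is precisely where the aperiodicity hypothesis is unavoidable. A fallback that avoids explicit coupling is to invoke Blackwell's renewal theorem: conditioning on the last renewal epoch before $n$ gives $\rho K^n = \sum_{k=0}^n u_k\, \rho {\bar K}^{n-k}$ with renewal density $u_k \to 1/m$, and Corollary~\ref{neumann} identifies $m^{-1}\sum_{j\ge 0}\rho{\bar K}^j$ with $\mu$; a standard splitting-at-$N$ argument then upgrades pointwise convergence to convergence in total variation, and reduction from general $\xi$ to $\xi = \rho$ is a brief conditioning on the first hit time of $P$.
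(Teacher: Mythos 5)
Your proposal is correct and follows essentially the same route as the paper: both couple a chain started from $\xi$ with an independent (or jointly constructed) stationary copy, use the regeneration at each visit to $P$ to reduce the problem to two renewal processes with inter-arrival law $\sigma_P+1$, and invoke aperiodicity~\eqref{nonlattice} to obtain an a.s.\ finite common renewal epoch before applying the coupling inequality. The only difference is that where you cite the classical renewal coupling theorem (or Blackwell's theorem as a fallback), the paper proves the existence of the common epoch directly by observing that the difference of the two hitting-time sequences is a mean-zero random walk with nonlattice steps, hence hits $0$ in finite time almost surely.
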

\begin{proof}
We use a simple coupling argument. Let $(H_n)$ be an independent copy of
$(J_n)$ and let $J_0 \sim \xi$ and
$H_0 \sim \mu$.  For $n \ge 0$, let $S_n$ (resp. $T_n$) be the times at which $(J_n)$ (resp. $(H_n)$) hit $P$ for the $(n+1)$st time. Then $S_{n+1}-S_{n}$, ${n \ge 0}$, 
are iid random variables with finite 
expected value and nonlattice 
distribution, and 
$(S_{n+1}-S_n)_{n\ge 0} \sim 
(T_{n+1}-T_n)_{n\ge 0}$. 
It follows that $(S_n-T_n)_{n\ge 0}$ is a mean zero random
walk with nonlattice step distribution.
Thus, its first time to hit $0$ is finite
almost surely. So
\begin{equation*}
\zeta := \inf\{n \ge 0\,:\, J_n \in P,\, H_n \in P\}
\end{equation*}
obeys
${\mathbb P}(\zeta \ge n) \to 0$ as $n \to \infty$. Note that
$J_n \sim H_n$
whenever $\zeta < n$. Thus
\begin{equation*}
|{\mathbb P}^\xi(J_n \in C) - {\mathbb P}^\mu(H_n \in C)| \le 2{\mathbb P}(\zeta \ge n).
\end{equation*}
Since $\mu$ is stationary for $(H_n)$ we have ${\mathbb P}^\mu(H_n \in C) = \mu(C)$.
Now
\begin{equation*}
\|{\mathbb P}^{\xi}(J_n \in \cdot) - \mu\|_{TV} = \sup_{C \subset M} |{\mathbb P}^\xi(J_n \in C) - \mu(C)| \le 2{\mathbb P}(\zeta \ge n),
\end{equation*}
which establishes the convergence result.
To see uniqueness, suppose $\xi$ is another invariant probability measure for $(J_n)$;
then the last display becomes $\|\xi-\mu\|_{TV} \le 2{\mathbb P}(\zeta \ge n)$. Letting
$n \to \infty$ shows that $\xi \sim \mu$.
\end{proof}

We now consider a class of 
problems where there 
is a smooth one-dimensional {\em reaction 
coordinate} $\psi:\Omega \to [0,1]$ 
tracking progress of $(X_t)$ from 
$R$ to $P$. In this case $\psi|_R \equiv 0$, 
$\psi|_P \equiv 1$, the milestones 
$M_1,\ldots,M_m$ are disjoint level sets of $\psi$, 
and $R = M_1$, $P = M_m$. The jump 
chain $(J_n)$ can only hop between 
neighboring milestones, unless it is 
at $P$. That is, if $J_n \in M_i$ for 
$i \notin \{1,m\}$, then $J_{n+1} \in M_{i-1}$ or $J_n \in M_{i+1}$; 
if $J_n \in M_1$ then $J_{n+1} \in M_2$; 
and if $J_n \in M_m$, then $J_{n+1} \in M_1$. 
Suppose that if $J_n \in M_i$ for 
$i \notin \{1,m\}$, then $J_{n+1} \in M_{i-1}$ 
with probability in $(0,1)$. Then the aperiodicity 
assumption~\eqref{nonlattice} is 
satisfied if and 
only if $m$ is odd. This is due to 
the fact that, if $J_0 \in M_1$, then 
$J_{m-1} \in M_m$ and $J_{m+1} \in M_m$ 
with positive probability, and $m$ and $m+2$ 
are coprime when $m$ is odd. 
On the other hand, if $m$ is even then 
the conclusion of Theorem~\ref{conv} 
cannot hold. To see this, let $m$ 
be even and suppose $J_0$ is supported 
in an odd-indexed milestone. 
Then $J_{2n}$ is always supported on an 
odd-indexed milestone, 
while $J_{2n+1}$ is always supported on an 
even-indexed milestone.

Theorem~\ref{conv} estabishes 
convergence the distribution of $J_n$ 
to $\mu$ in total variation norm. 
Even when $(J_n)$ is not aperiodic, 
it converges in 
a time-averaged sense. 
Thus, problems in sampling $\mu$ 
arising from aperiodicity 
can be managed by averaging over 
time. More precisely, we have 
the following version 
of the Birkhoff ergodic theorem:

\begin{theorem}\label{cesaro}
Let $J_0 \sim \xi$, with
$\xi$ a probability measure on $M$. 
For bounded measurable $f:M \to {\mathbb R}$,
\begin{equation}\label{erg}
\lim_{n\to \infty} \frac{1}{n}\sum_{i=0}^{n-1} f(J_i)
\stackrel{a.s.}{=} \int_M f\,d\mu \equiv \mu f.
\end{equation}
\end{theorem}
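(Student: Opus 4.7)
The plan is to exploit the source--sink regenerative structure of $(J_n)$ and reduce the statement to the classical SLLN for iid random variables via a renewal--reward argument; in particular, no aperiodicity assumption is needed. First, I would recursively define the successive visits to $P$ by $N_0 = \sigma_P$ and $N_{k+1} = \inf\{n > N_k\,:\, J_n \in P\}$ for $k \geq 0$. Because $(J_n)$ restarts at $\rho$ immediately upon reaching $P$, the strong Markov property at each $N_k$ makes the excursions $(J_{N_k+1}, J_{N_k+2}, \ldots, J_{N_{k+1}})$ into independent copies (for $k \ge 0$) of the chain started at $\rho$ and run until its first hit of $P$. Setting $L_k = N_{k+1} - N_k$ and $C_k = \sum_{i=N_k+1}^{N_{k+1}} f(J_i)$ then gives iid sequences $(L_k)_{k \ge 0}$ and $(C_k)_{k \ge 0}$.

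Next I would compute their means. The standing hypothesis gives $\mathbb{E}[L_0] = \mathbb{E}^\rho[\sigma_P + 1] < \infty$, and rewriting the formula for $\mu$ from Theorem~\ref{mu} against a bounded test function produces
\begin{equation*}
\mathbb{E}[C_0] \;=\; \mathbb{E}^\rho\!\left[\sum_{j=0}^{\sigma_P} f(J_j)\right] \;=\; (\mu f)\,\mathbb{E}^\rho[\sigma_P + 1].
\end{equation*}
Boundedness of $f$ yields $\mathbb{E}[|C_0|] < \infty$, so Kolmogorov's SLLN gives $K^{-1}\sum_{k=0}^{K-1} L_k \to \mathbb{E}[L_0]$ and $K^{-1}\sum_{k=0}^{K-1} C_k \to \mathbb{E}[C_0]$ almost surely as $K \to \infty$.

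To convert cycle averages into time averages, I would introduce the cycle counter $\kappa(n) = \max\{k \ge 0\,:\, N_k < n\}$, well-defined for all large $n$. The sandwich $N_{\kappa(n)} < n \le N_{\kappa(n)+1}$ combined with the SLLN for $L_k$ forces $n/\kappa(n) \to \mathbb{E}[L_0]$ almost surely. Decomposing
\begin{equation*}
\sum_{i=0}^{n-1} f(J_i) \;=\; \sum_{i=0}^{N_0} f(J_i) \;+\; \sum_{k=0}^{\kappa(n)-1} C_k \;+\; R_n,
\end{equation*}
the initial segment is an a.s.\ finite random variable (since $\mathbb{E}^\xi[\sigma_P] < \infty$ forces $\sigma_P < \infty$ a.s.), and $|R_n| \le \|f\|_\infty L_{\kappa(n)}$. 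The SLLN for $L_k$ also implies $\max_{k \le K} L_k = o(K)$ a.s., hence $L_{\kappa(n)}/n \to 0$ a.s. Dividing by $n$ and combining,
\begin{equation*}
\frac{1}{n}\sum_{i=0}^{n-1} f(J_i) \;=\; \frac{\kappa(n)}{n}\cdot\frac{1}{\kappa(n)}\sum_{k=0}^{\kappa(n)-1} C_k + o(1) \;\longrightarrow\; \frac{\mathbb{E}[C_0]}{\mathbb{E}[L_0]} \;=\; \mu f \quad \text{a.s.}
\end{equation*}

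The only real obstacle is the bookkeeping: verifying the iid property of the excursions carefully via the strong Markov property at each $N_k$ (using that the restart law $\rho$ does not depend on the history up to $N_k$), and controlling the two boundary contributions --- the initial transient from $0$ to $N_0$ under $\mathbb{P}^\xi$, whose distribution differs from that of the stationary cycles under $\mathbb{P}^\rho$, and the final partial cycle after $N_{\kappa(n)}$. The standing hypothesis $\mathbb{E}^\xi[\sigma_P] < \infty$ makes both of these $o(n)$ a.s., so the argument goes through for an arbitrary starting distribution $\xi$.
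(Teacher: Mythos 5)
Your proposal is correct and follows essentially the same route as the paper's proof: a regenerative decomposition at the successive visits to $P$, the strong law of large numbers applied to the iid cycle lengths and cycle rewards, and identification of the ratio $\mathbb{E}[C_0]/\mathbb{E}[L_0]$ with $\nu f/\nu(M) = \mu f$ via the formula from Theorem~\ref{mu}. If anything, your treatment of the final partial cycle (bounding it by $\|f\|_\infty L_{\kappa(n)}$ and using $\max_{k\le K} L_k = o(K)$ a.s.) is slightly more explicit than the paper's, which simply notes that $n - S_{k(n)}$ is a.s.\ finite.
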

\begin{proof}
Let $S_n$ be the times at which 
$(J_n)$ hits $P$ for the $(n+1)$st 
time, and define 
\begin{equation*}
f_n = \sum_{i=S_n+1}^{S_{n+1}} f(J_i).
\end{equation*}
Note that $f_n$, $n\ge 0$, are 
iid.
Let $k(n) = \max\{k\,:\,S_k \le n\}$
and write 
\begin{equation*}
\frac{1}{n}\sum_{i=0}^{n-1} f(J_i)
= \frac{1}{n}\sum_{i=0}^{S_0} f(J_i)
+ \frac{1}{n}\sum_{i=0}^{k(n)-1} f_i
+ \frac{1}{n}\sum_{i=S_{k(n)}+1}^n f(J_i)
\end{equation*}
Since $(J_n)$ hits $P$ in finite time 
a.s., $n-S_{k(n)}$ and $S_0$ 
are finite a.s. Thus, 
\begin{equation*}
\lim_{n\to \infty}\frac{1}{n}\sum_{i=0}^{n-1} f(J_i) 
= \lim_{n\to \infty} 
\frac{k(n)-1}{n}\frac{1}{k(n)-1}\sum_{i=0}^{k(n)-1} f_i.
\end{equation*}
Notice $R_n:= S_{n+1}-S_n$, $n\ge 0$ are iid 
with finite expectation and
\begin{equation*}
\frac{R_0+\ldots+R_{k(n)-1}}{k(n)} \le \frac{n-S_0}{k(n)} 
\le \frac{R_0+\ldots+R_{k(n)}}{k(n)}.
\end{equation*}
By the previous two displays and the law of 
large numbers,
\begin{equation*}
\lim_{n\to \infty}\frac{1}{n} \sum_{i=0}^{n-1} f(J_i) 
= \frac{{\mathbb E}[f_0]}{{\mathbb E}[R_0]} 
= {\mathbb E}^\rho[\sigma_P+1]^{-1}
\sum_{i=0}^{\sigma_P}{\mathbb E}^\rho[f(J_i)] =\frac{\nu f}{\nu(M)} = \mu f,
\end{equation*} 
with $\nu$ defined as in Theorem~\ref{conv}.
\end{proof}

Markov chains for which the conclusion 
of Theorem~\ref{conv} hold are 
called {\em Harris ergodic}. 
It is worth noting that a slightly 
stronger aperiodicity condition
leads to a limit for the 
distribution of $Y_t$. 
More precisely, 
suppose~\eqref{nonlattice} holds and 
for each $x \in M\setminus P$ and 
$y \in M$, ${\mathbb P}^x(\tau_1 \in \cdot\,|\,J_1 = y)$ is nonlattice. Then for any 
$C \subset M$ and 
$\mu$-a.e. $x$, 
\begin{equation}\label{stat_prob}
\lim_{t\to \infty} {\mathbb P}^x(Y_t \in C) 
= \frac{\int_C \mu(dy){\mathbb E}^y[\tau_M^y]}{
\int_M \mu(dy){\mathbb E}^y[\tau_M^y]}.
\end{equation}
See~\cite{Alsmeyer1994} for details\footnote{When the right hand side of~\eqref{stat_prob} 
has a density, it 
is the same as 
the stationary probability density $p(x)$ 
in~\cite{Bello-Rivas2015} for 
the last milestone point 
passed.} and proof.

\section{Exact Milestoning algorithm}\label{sec:algorithm}

We now describe in detail
an algorithm for sampling $\mu$ 
and the MFPT 
${\mathbb E}^\rho[\tau_P]$, 
used successfully 
in~\cite{Bello-Rivas2015} and~\cite{Bello-Rivas2015b}. We assume 
throughout this section that the conclusion of 
Theorem~\ref{conv} holds.
Let $\xi$ be an initial guess for $\mu$.
(If $(X_t)$ is Brownian
or Langevin dynamics, we usually
take $\xi$ to be the canonical
Gibbs distribution.) 
We write $M_i$ for the distinct milestones,
so that $M = \cup_i M_i$.
The algorithm
will produce approximations
\begin{equation*}
\xi \equiv \mu^{(0)}, \mu^{(1)}, \mu^{(2)},\ldots 
\end{equation*}
of $\mu$. Let $\mu_i^{(n)}$
be the {\em non-normalized} 
restriction of $\mu^{(n)}$ to
$M_i$, and define
\begin{equation*}
{\mathbb E}^{\mu_i^{(n)}}[\tau_M] :=
\mu^{(n)}(M_i)^{-1}\int_{M_i} \mu_i^{(n)}(dx){\mathbb E}^x[\tau_M^x],
\end{equation*}
For $C \subset M_j$ we will also use
the notation
\begin{equation*}
  a_{ij}^{(n)}(C) = \mu^{(n-1)}(M_i)^{-1} \int_{M_i} \mu_i^{(n-1)}(dx) \, K(x,C).
\end{equation*}
Below we think of $a_{ij}^{(n)}$
and $\mu_i^{(n)}$ as either
distributions or densities. The
$a_{ij}^{(n)}$ are obtained from trajectory fragments 
between milestone crossings.
A simple Monte Carlo scheme for 
estimating these distributions is as follows. Let $x_1,\ldots,x_L$ be 
iid samples from the distribution 
$\mu_i^{(n-1)}/\mu^{(n-1)}(M_i)$. 
Starting at each $x_\ell \in M_i$, simulate $(X_t)$ until it crosses the next milestone, say at the point $y_\ell \in M_j$. 
If we idealize by assuming the simulation 
of $(X_t)$ is 
done exactly, then by Chebyshev's inequality,
\begin{equation*}
{\mathbb P}\left(\left|a_{ij}^{(n)}(C) - \int_C \frac{1}{L}\sum_{\ell=1}^L 
\delta_{y_\ell}(dy)\right| > \epsilon\right) \le \frac{a_{ij}^{(n)}(C)-a_{ij}^{(n)}(C)^2}{L\epsilon^2}, 
\end{equation*}
where $\delta_y$ is the Dirac delta distribution 
at $y$. We therefore write, for $y \in M$, 
\begin{equation}\label{aij}
  a^{(n)}_{ij}(y) \approx \frac{1}{L} \sum_{\ell = 1}^{L} {\tilde \delta}_{y_\ell}(y),
\end{equation}
where ${\tilde \delta}_{y_\ell}$ is either some suitable {approximation to the identity} at $y_\ell$, or simply a delta function at $y_\ell$. Thus, in Algorithm~1 we think of 
$a_{ij}^{(n)}$ and $\mu_i^{(n-1)}$ as either 
densities in the 
former case, or as distributions in the latter. 
The local mean first passage times (i.e., the times between successive
milestone crossings) are approximated by the sample means
\begin{equation*}
  \E^{\mu_i^{(n-1)}}[ \tau_M ] \approx \frac{1}{L} \sum_{\ell = 1}^{L} \tau_M^{x_\ell}.
\end{equation*}
It is important to realize that we do not need to store the full coordinates of each $y_\ell$ in memory.
Instead, it suffices to use a data-structure that keeps track of the pairs $(y_\ell, M_j)$.
The actual coordinates of each point can be written to disk and read from it as needed.
\begin{algorithm}[ht]
  \caption{Exact Milestoning algorithm.}
  \label{alg:exact-milestoning}
  \begin{algorithmic}[]
    \REQUIRE Milestones $M = \cup_{j = 1}^m M_j$, initial guess $\xi$, and tolerance $\varepsilon > 0$ for the absolute error in the MFPT.
    \ENSURE Estimates for $\mu$, local MFPTs $\E^\mu[\tau_M]$, and overall MFPT $\E^\rho[ \tau_P ]$.
    \STATE $\mu^{(0)}$ $\leftarrow$ $\xi$
    \STATE $T^{(0)}$ $\leftarrow$ $+\infty$
    \FORALL{$n = 1, 2, \dotsc$}
    \FOR{$i = 1$ \TO $m$}
    \STATE Estimate $a^{(n)}_{ij}$ and $\E^{\mu_i^{(n-1)}}[\tau_M]$
    \STATE $\mathtt{A}^{(n)}_{ij}$ $\leftarrow$ $a^{(n)}_{ij} (M_j)$
    \ENDFOR
    \STATE Solve $\mathbf{w}^{\mathtt{T}} \mathtt{A} = \mathbf{w}^{\mathtt{T}}$ (with $\mathtt{A} = ( \mathtt{A}^{(n)}_{ij} ) \in \R^{m \times m}_{\ge 0}$ and $\mathbf{w} = (w_1, \dotsc, w_m) \in \R^{m}_{\ge 0}$)
    \FOR{$j = 1$ \TO $m$}
    \STATE $\mu^{(n)}_j$ $\leftarrow$ $\sum_{i = 1}^m w_i \, a^{(n)}_{ij}$
    \ENDFOR
    \STATE Normalize $\mu^{(n)}$
    \STATE $T^{(n)}$ $\leftarrow$ ${\mu(P)}^{-1} \, \E^{\mu^{(n-1)}}[\tau_M]$
      \IF{$| T^{(n)} - T^{(n-1)} | < \varepsilon$}
    \STATE \textbf{break}
    \ENDIF
    \ENDFOR
    \RETURN $(\mu^{(n)}, \E^{\mu^{(n-1)}}[\tau_M], T^{(n)})$
  \end{algorithmic}
\end{algorithm}

The eigenvalue problem in Algorithm~\ref{alg:exact-milestoning} involves a stochastic matrix $\mathtt{A} \in \mathbb{R}^{m \times m}_{\ge 0}$ that is sparse.
Indeed, the $i$-th row corresponds to milestone $M_i$ and may have only 
as many non-zero entries as the number of neighboring milestones $M_j$.
In practice, to solve the eigenvalue problem we can use efficient and accurate Krylov subspace solvers~\cite{Golub2013} such as Arnoldi iteration~\cite{Lehoucq1998} to obtain $\mathbf{w}$ without computing all the other eigenvectors.

{In Algorithm~\ref{alg:exact-milestoning}, if $\mathbf{w}_i := \mu_i^{(n-1)}(M_i)$ 
is used instead of the solution 
$\mathbf{w}$ to $\mathbf{w}^{\mathtt{T}} \mathtt{A} = \mathbf{w}^{\mathtt{T}}$, then the 
algorithm approximates $\mu$ by 
simple power iteration, $\mu^{(n)} = \xi K^n$. The reason 
for defining the weights as the solution 
to $\mathbf{w}^{\mathtt{T}} \mathtt{A} = \mathbf{w}^{\mathtt{T}}$ is practical: we have found 
that it gives faster convergence of the 
iterations, at no apparent cost to accuracy. 
It can be seen as a version of power 
iteration that uses coarse-graining.} 
See~\cite{Bello-Rivas2015,Bello-Rivas2015b} 
for applications of the algorithm in Exact 
Milestoning and~\cite{Golub2013,Lehoucq1998} for related 
discussions.

Finally, we mention the fact that pseudo-random number generators (PRNGs) can only produce a finite amount of pseudo-random numbers.
Once the maximum amount is reached, the generators may silently reuse the previous random numbers in the same order.
It has been noted~\cite{Cerutti2008} that this phenomenon leads to unphysical artifacts in simulations.
The simplest approach to properly use PRNGs (and avoid the aforementioned artifacts altogether) consists of reseeding the generator from time to time, obtaining the new seeds from high-quality entropy sources such as those available in modern computer hardware (see~\cite{Dodis2014,Intel2014} for more details). 

\section{Error analysis}\label{sec:error}

\subsection{Stationary distribution error}

In practice, due to time discretization error, 
we cannot generate trajectories 
exactly according to the transition 
kernel $K$. Instead, we can 
generate trajectories according 
to a numerical approximation
$K_\epsilon$. 
We investigate here whether such schemes are consistent, that is, whether powers of $K_\epsilon$ of $K$ 
converge to a distribution $\mu_\epsilon \approx \mu$. 
We emphasize 
that, even though we account for 
time discretization here, we still 
assume infinite sampling, and 
thus for a given $x \in M$, 
$K_\epsilon(x,dy)$ may be a continuous 
distribution. 
See Section~\ref{MFPT_approx}
below for related remarks and 
a discussion of how time discretization 
errors affect the Exact Milestoning 
estimate of the MFPT.

The following theorem, restated from~\cite{Ferre2013},
establishes consistency of iteration 
schemes based on Theorem~\ref{conv} 
when $K_\epsilon$
is sufficiently close to $K$ and $(J_n)$
is geometrically ergodic. 
After the theorem, in Lemma~\ref{geom_ergodic} 
and Theorem~\ref{feller} we give natural 
conditions 
for geometric ergodicity of $(J_n)$.

\begin{theorem}\label{ferre_thm}
Suppose $(J_n)$ is geometrically ergodic:
there exists $\kappa \in (0,1)$
such that
\begin{equation*}
 \sup_{x \in M}||\delta_x K^n - \mu||_{TV}
 =  O(\kappa^n).
\end{equation*}
Let $\{K_\epsilon\}$ be a family of
stochastic kernels with $K_0 = K$, assumed to act
continuously on ${\mathcal B}$, such that
\begin{equation}\label{kernelapprox}
\lim_{\epsilon \to 0} \sup_{|f| \le 1}
\|K_\epsilon f - Kf\|_{\infty} = 0.
\end{equation}
Then for each $\hat \kappa \in (\kappa,1)$, there is $\delta > 0$ such that for
each $\epsilon \in [0,\delta)$,
$K_\epsilon$ has a unique invariant
probability measure $\mu_\epsilon$, and
\begin{align*}
&\sup_{\epsilon < \delta} \sup_{x \in M}
\|\delta_x K_\epsilon^n - \mu_\epsilon \|_{TV} = O({\hat \kappa}^n),\\
&\lim_{\epsilon \to 0} \|\mu_\epsilon -\mu \|_{TV} = 0.
\end{align*}
\end{theorem}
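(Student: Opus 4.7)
The plan is to reformulate the geometric ergodicity of $K$ as an operator-theoretic contraction property on the Banach space of finite signed measures equipped with total variation norm, and then carry out a standard perturbation argument. Let $\mathcal{B}$ denote the bounded measurable functions on $M$ with the supremum norm; each stochastic kernel then induces a contraction on $\mathcal{B}$ and, dually, on the signed measures under total variation. The pointwise hypothesis $\sup_x \|\delta_x K^n - \mu\|_{TV} = O(\kappa^n)$ upgrades to a uniform bound $\|\nu K^n\|_{TV} \le C\kappa^n \|\nu\|_{TV}$ for every signed measure $\nu$ with $\nu(M) = 0$: split $\nu$ via its Jordan--Hahn decomposition $\nu = \nu^+ - \nu^-$ (with $\nu^+(M) = \nu^-(M) = \|\nu\|_{TV}/2$) and apply the hypothesis to the normalized probability measures $\nu^\pm/\nu^\pm(M)$, noting that the $\mu$-parts cancel.

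Next I would establish a finite-step perturbation bound. The telescoping identity
\begin{equation*}
K_\epsilon^N - K^N = \sum_{j=0}^{N-1} K_\epsilon^{j}\,(K_\epsilon - K)\,K^{N-1-j},
\end{equation*}
together with the fact that stochastic kernels have operator norm $1$, yields $\|K_\epsilon^N - K^N\| \le N\|K_\epsilon - K\|$, which tends to $0$ with $\epsilon$ by~\eqref{kernelapprox}. Given $\hat\kappa \in (\kappa,1)$, I would fix $N$ large enough that $C\kappa^N \le \hat\kappa^N/2$, and then choose $\delta > 0$ so that $N\|K_\epsilon - K\| \le \hat\kappa^N/2$ for all $\epsilon < \delta$. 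Combining these, for any zero-mass signed measure $\nu$ and any $\epsilon < \delta$,
\begin{equation*}
\|\nu K_\epsilon^N\|_{TV} \le \|\nu K^N\|_{TV} + \|\nu(K_\epsilon^N - K^N)\|_{TV} \le \hat\kappa^N \|\nu\|_{TV}.
\end{equation*}

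With this uniform contraction in hand, existence and uniqueness of $\mu_\epsilon$ follow from the Banach fixed-point theorem applied to the map $\xi \mapsto \xi K_\epsilon^N$ on the complete metric space of probability measures with total variation distance; the resulting fixed point of $K_\epsilon^N$ is also fixed by $K_\epsilon$, since $\mu_\epsilon K_\epsilon$ is again $K_\epsilon^N$-invariant and uniqueness forces $\mu_\epsilon K_\epsilon = \mu_\epsilon$. The uniform rate $\sup_{\epsilon<\delta}\sup_x\|\delta_x K_\epsilon^n - \mu_\epsilon\|_{TV} = O(\hat\kappa^n)$ then follows by writing $n = kN + r$ with $0 \le r < N$ and iterating the $N$-step contraction on the zero-mass signed measure $\delta_x K_\epsilon^r - \mu_\epsilon$, absorbing the at-most-$N$ residual steps into the constant.

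For the convergence $\mu_\epsilon \to \mu$, I would use that each is invariant under its own kernel. For any $n$,
\begin{equation*}
\|\mu_\epsilon - \mu\|_{TV} = \|\mu_\epsilon K_\epsilon^n - \mu K^n\|_{TV} \le \|\mu_\epsilon(K_\epsilon^n - K^n)\|_{TV} + \|(\mu_\epsilon - \mu)K^n\|_{TV} \le n\|K_\epsilon - K\| + C\kappa^n \|\mu_\epsilon - \mu\|_{TV}.
\end{equation*}
Fixing $n$ with $C\kappa^n \le 1/2$ and rearranging gives $\|\mu_\epsilon - \mu\|_{TV} \le 2n\|K_\epsilon - K\|$, which tends to $0$ as $\epsilon \to 0$. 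The main conceptual obstacle is the very first step: promoting the pointwise geometric ergodicity assumption to a uniform operator-norm bound on zero-mass signed measures, so that the Jordan--Hahn-based argument and the perturbation machinery can run cleanly. Once this is done, the Banach fixed-point application, the $N$-step iteration for the rate, and the telescoping comparison are routine.
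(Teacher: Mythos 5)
The paper does not actually prove this theorem: it is explicitly ``restated from'' Ferr\'e--Herv\'e--Ledoux~\cite{Ferre2013}, whose argument lives in the more general setting of $V$-geometrically ergodic chains and proceeds through operator perturbation theory for quasi-compact operators (Keller--Liverani-type machinery), precisely because there the perturbation is controlled in a norm weaker than the one in which ergodicity holds. Your proof is a genuinely different and more elementary route, and it is correct: because the theorem as stated here assumes \emph{uniform} geometric ergodicity in total variation and controls $K_\epsilon-K$ in the dual operator norm $\sup_{|f|\le 1}\|K_\epsilon f-Kf\|_\infty$, the ergodicity norm and the perturbation norm coincide, so the telescoping identity plus the Jordan--Hahn upgrade to a contraction on zero-mass signed measures plus the Banach fixed-point theorem suffice, with no spectral theory needed. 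Each step checks out: the extension of the hypothesis from Dirac masses $\delta_x$ to arbitrary probability measures $\xi$ (needed before the Jordan--Hahn split) follows from $\|\xi K^n-\mu\|_{TV}\le\int\|\delta_x K^n-\mu\|_{TV}\,\xi(dx)$, which you use implicitly and should state; the passage from $K_\epsilon^N$-invariance to $K_\epsilon$-invariance via uniqueness is the standard and correct fix for the fact that the contraction is only established for the $N$-step kernel; and the final bootstrap $\|\mu_\epsilon-\mu\|_{TV}\le 2n\|K_\epsilon-K\|$ is clean. What your approach buys is a self-contained proof matching the theorem exactly as stated (and, incidentally, dovetailing with the Doeblin-type contraction the paper itself uses in Lemma~\ref{geom_ergodic}); what the citation buys is generality to unbounded Lyapunov-weighted norms, which the present setting does not require.
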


Geometric ergodicity is inconvenient to check
directly. We give two sufficient conditions
for geometric ergodicity of $(J_n)$.
The first condition is a uniform
lower bound on the probability to reach $P$ in $N$
 steps; see Lemma~\ref{geom_ergodic}. We use this to obtain
a strong Feller condition in Theorem~\ref{feller}.
The latter is a very natural condition and is easy to
verify in some cases, for instance when
$(X_t)$ is a nondegenerate diffusion 
and the milestones are sufficiently regular.

\begin{lemma}\label{geom_ergodic}
Suppose that there exists $\lambda \in (0,1)$ and $N \in {\mathbb N}$ such that
for all $x \in M$, ${\mathbb P}^x(J_{N-1} \in P) \ge \lambda > 0$. Then $(J_n)$ is
geometrically ergodic:
\begin{equation*}
 \sup_{x \in M}||\delta_x K^n - \mu||_{TV}
 \le \lambda^{-1}(1-\lambda)^{\lfloor n/N\rfloor}.
\end{equation*}
\end{lemma}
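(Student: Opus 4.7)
The plan is to exploit the restart-at-$P$ structure to derive a Doeblin-type minorization condition on $K^N$, then run a standard coupling. Since $(Y_t)$ (and hence $(J_n)$) restarts at $\rho$ upon hitting $P$, the hypothesis $\mathbb{P}^x(J_{N-1}\in P)\ge\lambda$ translates immediately into a lower bound on $K^N(x,\cdot)$. Concretely, for any measurable $A\subset M$ and any $x\in M$,
\begin{equation*}
K^N(x,A)=\mathbb{E}^x\bigl[K(J_{N-1},A)\bigr]\ge \mathbb{E}^x\bigl[\mathbbm{1}_{\{J_{N-1}\in P\}}\,\rho(A)\bigr]\ge \lambda\,\rho(A),
\end{equation*}
since conditional on $J_{N-1}\in P$, $J_N\sim\rho$ by the source-sink assumption on $(Y_t)$.

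Next I would apply the classical Doeblin coupling argument: the minorization $K^N(x,\cdot)\ge\lambda\rho(\cdot)$ lets us write $K^N(x,\cdot)=\lambda\rho(\cdot)+(1-\lambda)Q(x,\cdot)$ for some stochastic kernel $Q$. Run two copies of $(J_n)$, one started from $\delta_x$ and one from $\mu$ (which is invariant by Theorem~\ref{mu}); every $N$ steps, with probability at least $\lambda$ we can couple by resampling both from $\rho$, after which they remain coupled (they evolve identically, and $\rho K^j$ is common to both). Iterating, the two chains fail to have coupled by time $kN$ with probability at most $(1-\lambda)^k$, and since TV distance is controlled by the coupling probability,
\begin{equation*}
\|\delta_x K^{kN}-\mu\|_{TV}\le (1-\lambda)^k.
\end{equation*}

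To get the bound at an arbitrary time $n$, write $n=kN+r$ with $0\le r<N$ and $k=\lfloor n/N\rfloor$. Using the contraction property $\|\pi K-\mu\|_{TV}\le\|\pi-\mu\|_{TV}$ (which follows from the coupling inequality applied to one-step evolution), we have
\begin{equation*}
\|\delta_x K^n-\mu\|_{TV}\le\|\delta_x K^{kN}-\mu\|_{TV}\le(1-\lambda)^{\lfloor n/N\rfloor}\le\lambda^{-1}(1-\lambda)^{\lfloor n/N\rfloor},
\end{equation*}
the last inequality being trivial since $\lambda^{-1}\ge 1$. (The factor $\lambda^{-1}$ in the statement gives slack and in particular makes the bound non-vacuous/safely larger than $1$ for $n<N$.)

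I don't anticipate a real obstacle. The only point requiring care is the precise use of the source-sink structure in deriving the minorization -- one must observe that $\mathbb{P}^x(J_{N-1}\in P)$ translates to a lower bound on $K^N(x,\cdot)$, not $K^{N-1}(x,\cdot)$, because it is the \emph{next} jump after hitting $P$ that has distribution $\rho$. After that, everything is standard Doeblin theory.
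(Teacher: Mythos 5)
Your proof is correct and rests on the same key fact as the paper's: the source--sink structure upgrades the hypothesis ${\mathbb P}^x(J_{N-1}\in P)\ge\lambda$ to the Doeblin minorization $K^N(x,\cdot)\ge\lambda\,\rho(\cdot)$, which the paper uses implicitly when it asserts that $K^{N}(y,dz)-\lambda\rho(dz)$ is a positive measure. The only difference is presentational: you finish with the standard Doeblin coupling, while the paper finishes by showing $K^N$ is a $(1-\lambda)$-contraction in total variation and invoking the contraction mapping theorem (whence its factor $\lambda^{-1}$); the two routes are equivalent, and yours makes explicit the minorization step the paper leaves tacit.
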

\begin{proof}
Let $\xi_1,\xi_2 \in {\mathcal P}$, consider the
signed measure $\xi = \xi_1-\xi_2$ and compute
\begin{align*}
\|\xi_1 K^N - \xi_2 K^N\|_{TV}
&= \sup_{|f|\le 1}
\left|\int_M \int_M \xi(dy)K^{N}(y,dz)f(z)\right|\\
&= \sup_{|f|\le 1}
\left|\int_M \int_M \xi(dy)\left(K^{N}(y,dz)- \lambda \rho(dz)\right)f(z)\right|\\
&=\sup_{|f|\le 1}
\left|\int_M \xi(dy)\int_M\left(K^{N}(y,dz)- \lambda \rho(dz)\right)f(z)\right|\\
&\le (1-\lambda)\sup_{|f|\le 1}
\left|\int_M \xi(dy)f(y)\right| = (1-\lambda)\|\xi_1-\xi_2\|_{TV}.
\end{align*}
The last line uses the fact that
$K^{N}(y,dz)- \lambda \rho(dz)$
is a positive measure.
This shows that $K^N$ is a contraction mapping on ${\mathcal P}$ with
contraction constant $(1-\lambda)$. 
Observe also that $\|\xi_1 K- \xi_2 K\|_{TV} 
\le \|\xi_1-\xi_2\|_{TV}$. The
result now follows from the contraction
mapping theorem. See for instance Theorem~6.40 
of~\cite{douc2014nonlinear}.
\end{proof}

{Note that the $\lambda$ in 
Lemma}~\eqref{geom_ergodic} {is 
a quantity that can be estimated, 
at least in principle, by 
running trajectories of 
$(X_t)$ starting at $x$ which 
cross $N-1$ milestones before 
reaching $P$. However, this 
is likely impractical for the 
same reason direct estimation 
of the MFPT is impractical -- 
the trajectories would be too 
long. One alternative  
would be to compute the 
probability $P^i(J_{N-1}^{\mathtt{A}} \in P)$ 
for the Markov chain $(J_n^{\mathtt{A}})$ 
on $\{1,\ldots,m\}$
with transition matrix 
${\mathtt{A}}$, and use the 
minimum over $i \in \{1,\ldots,m\}$ 
as a proxy for $\lambda$. Even 
without a practical way to estimate 
$\lambda$, we believe 
the characterization of 
Lemma}~\ref{geom_ergodic} {is useful 
for understanding the 
convergence rate.}

Lemma~\ref{geom_ergodic}
leads to the following
condition for geometric ergodicity of $(J_n)$.

\begin{theorem}\label{feller}
Suppose that $M$ is compact and $(J_n)$ is a strong
Feller chain which is aperiodic
in the sense of~\eqref{nonlattice}.
Then $(J_n)$ is geometrically
ergodic.
\end{theorem}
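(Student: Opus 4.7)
The plan is to verify the hypothesis of Lemma~\ref{geom_ergodic}, namely that there exist $N \in {\mathbb N}$ and $\lambda \in (0,1)$ with ${\mathbb P}^x(J_{N-1} \in P) \ge \lambda$ for every $x \in M$; geometric ergodicity then follows immediately. The first ingredient is continuity: since $K$ is strong Feller, $Kg$ is continuous whenever $g$ is bounded measurable, and iterating shows that $x \mapsto K^n(x,P) = {\mathbb P}^x(J_n \in P)$ is continuous on $M$ for every $n \ge 1$. Combined with compactness of $M$, it then suffices to find a single $N$ for which this continuous function is strictly positive on all of $M$; the minimum is attained and provides the desired $\lambda$.

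To produce such an $N$, I would combine aperiodicity with the source-sink structure. Since ${\mathbb E}^\rho[\sigma_P] < \infty$, the set $A = \{k \ge 1 : {\mathbb P}^\rho(\sigma_P = k-1) > 0\}$ is nonempty, and by~\eqref{nonlattice} its gcd is $1$. A standard number-theoretic fact (the Chicken McNugget/Frobenius observation) then gives an integer $N_0$ such that every $m \ge N_0$ can be written as a finite sum of elements of $A$. Concatenating independent $\rho$-to-$P$ excursions of the corresponding lengths, which is legitimate because $(Y_t)$ restarts at $\rho$ upon hitting $P$ (so $J_{n+1} \sim \rho$ whenever $J_n \in P$), yields ${\mathbb P}^\rho(J_{m-1} \in P) > 0$ for all $m \ge N_0$. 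Meanwhile, for each $x \in M$ the finiteness of ${\mathbb E}^x[\sigma_P]$ supplies some $n_x$ with ${\mathbb P}^x(J_{n_x} \in P) > 0$. Strong Feller continuity of $y \mapsto {\mathbb P}^y(J_{n_x} \in P)$ makes this an open condition, so there is a neighborhood $U_x$ of $x$ on which it stays positive. Conditioning on $J_{n_x} \in P$, using the ensuing restart at $\rho$, and invoking the previous bound shows that ${\mathbb P}^y(J_{N-1} \in P) > 0$ for every $y \in U_x$ whenever $N \ge n_x + N_0 + 1$.

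Finally, I would extract a finite subcover $U_{x_1},\ldots,U_{x_k}$ of $M$, pick $N \ge \max_i n_{x_i} + N_0 + 1$, and observe that $y \mapsto {\mathbb P}^y(J_{N-1} \in P)$ is then a continuous, strictly positive function on the compact set $M$. Its infimum $\lambda > 0$ fulfills the hypothesis of Lemma~\ref{geom_ergodic}, completing the proof. The principal technical step is the bridge between the local continuity furnished by strong Feller and the ``global in $n$'' information supplied by aperiodicity: without the source-sink restart at $\rho$, one could not convert the pointwise times $n_x$ into a single $N$ that works for all $x$ simultaneously, and without strong Feller one could not turn the pointwise positivity at each $x$ into uniform positivity.
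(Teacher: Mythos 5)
Your proposal follows the same overall strategy as the paper---verify the hypothesis of Lemma~\ref{geom_ergodic} by combining strong Feller continuity of $x \mapsto {\mathbb P}^x(J_n \in P)$ with compactness of $M$---but it executes the key uniformization step quite differently, and in fact more carefully. The paper obtains pointwise eventual positivity by citing Theorem~\ref{conv} (so that ${\mathbb P}^x(J_n \in P) \to \mu(P) > 0$ for each $x$) and then asserts that compactness yields a single $N$ working for all $x$. As stated this elides a real issue: pointwise convergence of continuous functions on a compact set is not uniform, and a finite subcover built from the sets $\{y : {\mathbb P}^y(J_{N_x} \in P) > \lambda\}$ gives positivity at \emph{different} times on different pieces of $M$. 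Your argument supplies exactly the mechanism needed to close this: the restart at $\rho$ upon hitting $P$, together with the numerical-semigroup consequence of~\eqref{nonlattice} (namely ${\mathbb P}^\rho(J_{m-1}\in P)>0$ for all $m \ge N_0$), lets you propagate positivity from the local time $n_x$ forward to every common time $N-1 \ge n_x + N_0$, after which the finite subcover genuinely produces one $N$ valid on all of $M$. What the paper's route buys is brevity (it reuses Theorem~\ref{conv} wholesale); what yours buys is a self-contained and fully rigorous uniformization that does not lean on the coupling theorem. One small repair: for $x \in P$ you cannot take $n_x = 0$, since $y \mapsto {\mathbb P}^y(J_0 \in P) = \mathbbm{1}_P(y)$ is not covered by the strong Feller property; but since $J_1 \sim \rho$ from any $x \in P$, you may always choose $n_x \ge 1$ (e.g., $n_x = k$ for any $k$ with ${\mathbb P}^\rho(\sigma_P = k-1) > 0$), and the rest of your argument goes through unchanged.
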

\begin{proof}
Let $\epsilon \in (0,\mu(P))$.
By Theorem~\ref{conv}, for
each $x \in M$ there is
$N_x \in {\mathbb N}$ such
that ${\mathbb P}^x(J_n \in P) \ge \epsilon$
for all $n \ge N_x$. Because $(J_n)$
is strong Feller, the map
$x \to {\mathbb P}^x(J_n \in P)$
is continuous. By
compactness of $M$, it follows that for
any $\lambda \in (0,\epsilon)$
there is $N \in {\mathbb N}$ such that ${\mathbb P}^y(J_n \in P)
\ge \lambda$ for all $y \in M$
and $n \ge N-1$. Theorem~\ref{geom_ergodic}
now yields the result.
\end{proof}

\subsection{MFPT error}
\label{MFPT_approx}

As discussed above, Equation~\ref{main_eq} can be used to estimate the MFPT
${\mathbb E}^\rho[\tau_P]$
by sampling $\mu$ and local 
MFPTs $\tau_M^x$. The
error in this estimate has two
sources. First, in general we only have
an approximation ${\tilde \mu} \equiv \mu_\epsilon$
of $\mu$. The second source of error is
in the sampling of $\tau_M^x$, due 
to the fact that 
we can only simulate a time discrete
version $({\tilde X}_{n{\delta t}})$
of $(X_t)$. In Theorem~\ref{errors} below 
we give an explicit
formula for the numerical error
of the MFPT in terms of these two sources. 
We first need the following notation. 
Let ${\tilde \tau}_M^x$ be
the minimum of all $n \,\delta t > 0$
such that the line segment between
${\tilde X}_{n{\delta t}}$ and
${\tilde X}_{(n+1){\delta t}}$
intersects $M \setminus M_x$, and define
\begin{equation*}
{\mathbb E}^{\tilde \mu}[{\tilde \tau}_M]
:= \int_M {\tilde \mu}(dx){\mathbb E}^x[{\tilde \tau}_M^x].
\end{equation*}
Theorem~\ref{errors} 
below gives an expression for the error in 
the {\em original} Milestoning as 
well as in Exact Milestoning.

\begin{theorem}\label{errors}
There exists a nonnegative function
$\phi$ such that
\begin{align}\begin{split}\label{error_M}
  \left| {\mathbb E}^\rho[\tau_P] - {\tilde \mu}(P)^{-1}{\mathbb E}^{\tilde \mu}[{\tilde \tau}_M] \right|
  &\le c_1 \left| \mu(P)^{-1}-{\tilde \mu}(P)^{-1} \right| \\
  &\qquad+ {\tilde \mu(P)^{-1}} \left( c_2 \|\mu - {\tilde \mu}\|_{TV} + \phi(\delta t) \right),\end{split}
\end{align}
where
\begin{equation*}
c_1 := {\mathbb E}^\mu[\tau_M],\qquad c_2:=\sup_{x \in M} {\mathbb E}^x[\tau_M^x].
\end{equation*}
\end{theorem}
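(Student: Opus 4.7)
The plan is to start from the Milestoning equation of Theorem~\ref{milestoning}, which gives the exact identity ${\mathbb E}^\rho[\tau_P] = \mu(P)^{-1}{\mathbb E}^\mu[\tau_M]$. With this substitution, the left-hand side of~\eqref{error_M} becomes
\[
\left|\mu(P)^{-1}{\mathbb E}^\mu[\tau_M] - {\tilde \mu}(P)^{-1}{\mathbb E}^{\tilde \mu}[{\tilde \tau}_M]\right|,
\]
so the whole proof reduces to a careful triangle-inequality decomposition into three terms corresponding to the three error sources visible on the right of~\eqref{error_M}: the error in the normalizing factor $\mu(P)^{-1}$, the error in sampling $\mu$, and the time-discretization error in the local MFPTs.

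The first step is to separate the normalization error from the numerator error via the algebraic identity $AX - BY = (A-B)X + B(X-Y)$ applied with $A = \mu(P)^{-1}$, $B = {\tilde\mu}(P)^{-1}$, $X = {\mathbb E}^\mu[\tau_M]$, and $Y = {\mathbb E}^{\tilde\mu}[{\tilde\tau}_M]$. Taking absolute values immediately yields the term $c_1|\mu(P)^{-1}-{\tilde\mu}(P)^{-1}|$ (since $X = c_1$) plus ${\tilde\mu}(P)^{-1}\,|X - Y|$, matching the structure of the claimed bound.

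The second step is to bound $|X - Y| = |{\mathbb E}^\mu[\tau_M] - {\mathbb E}^{\tilde\mu}[{\tilde\tau}_M]|$ by splitting
\[
{\mathbb E}^\mu[\tau_M] - {\mathbb E}^{\tilde\mu}[{\tilde\tau}_M] = \int_M (\mu-{\tilde\mu})(dx)\,{\mathbb E}^x[\tau_M^x] + \int_M {\tilde\mu}(dx)\bigl({\mathbb E}^x[\tau_M^x] - {\mathbb E}^x[{\tilde\tau}_M^x]\bigr).
\]
The first summand is bounded by $c_2\,\|\mu - {\tilde\mu}\|_{TV}$ using the definition of total variation norm together with the uniform bound $|{\mathbb E}^x[\tau_M^x]| \le c_2$. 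The second summand is a pure time-discretization error, and the cleanest course is to simply define
\[
\phi(\delta t) := \left|\int_M {\tilde\mu}(dx)\bigl({\mathbb E}^x[\tau_M^x] - {\mathbb E}^x[{\tilde\tau}_M^x]\bigr)\right|,
\]
which is nonnegative by construction and depends only on $\delta t$ (and on the fixed kernel, milestones, and approximating measure).

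There is no substantive obstacle here — the result is essentially an application of the triangle inequality to the Milestoning identity, and the only real choice is how to package the discretization error. The one point that deserves a brief comment in the write-up is that $c_2 < \infty$: this follows from the standing assumption ${\mathbb E}^\xi[\sigma_P], {\mathbb E}^\xi[\tau_P] < \infty$ for every probability measure $\xi$ on $M$, applied with $\xi = \delta_x$ and using $\tau_M^x \le \tau_P$.
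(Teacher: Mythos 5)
Your proposal is correct and follows essentially the same route as the paper: both invoke the Milestoning identity of Theorem~\ref{milestoning}, split off the normalization error $c_1|\mu(P)^{-1}-\tilde\mu(P)^{-1}|$, bound the measure-approximation term by $c_2\|\mu-\tilde\mu\|_{TV}$ via the same supremum argument, and define $\phi(\delta t)$ as the residual $|{\mathbb E}^{\tilde\mu}[\tau_M]-{\mathbb E}^{\tilde\mu}[\tilde\tau_M]|$. Your added remark on why $c_2<\infty$ is a small bonus the paper omits, but the decomposition is the same.
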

\begin{proof}
Note that
\begin{align*}
|{\mathbb E}^\rho[\tau_P] -
{\tilde \mu}(P)^{-1} {\mathbb E}^{\tilde \mu}[{\tilde \tau}_M]|&= |\mu(P)^{-1}{\mathbb E}^{\mu}[\tau_M] -
\tilde{\mu}(P)^{-1}{\mathbb E}^{\tilde \mu}[{\tilde \tau}_M]|\\
&\le |\mu(P)^{-1}{\mathbb E}^\mu[\tau_M] -
{\tilde \mu}(P)^{-1}{\mathbb E}^\mu[\tau_M]| \\
&\qquad+ |{\tilde \mu}(P)^{-1}{\mathbb E}^\mu[\tau_M]
- {\tilde \mu}(P)^{-1}{\mathbb E}^{\tilde \mu}[\tau_M]| \\
&\qquad\qquad+ |{\tilde \mu}(P)^{-1}{\mathbb E}^{\tilde \mu}[\tau_M]
- {\tilde \mu}(P)^{-1}{\mathbb E}^{\tilde \mu}[{\tilde \tau}_M]|
\end{align*}
where we have written 
${\mathbb E}^{\tilde \mu}[{ \tau}_M]
:= \int_M {\tilde \mu}(dx){\mathbb E}^x[{\tau}_M^x]$. We may write
\begin{equation*}
\phi(\delta t) = |{\mathbb E}^{\tilde \mu}[\tau_M]
- {\mathbb E}^{\tilde \mu}[{\tilde \tau}_M]|
\end{equation*}
for the term depending only on time stepping error. Note that
\begin{align*}
|{\mathbb E}^\mu[\tau_M] - {\mathbb E}^{\tilde \mu}[\tau_M]|
&= \left|\int_M {\mu}(dx){\mathbb E}^x[\tau_M^x]-
\int_M {\tilde \mu}(dx){\mathbb E}^x[\tau_M^x]\right|\\
&\le \left(\sup_{x \in M} {\mathbb E}^x[\tau_M^x]\right)
\|\mu-{\tilde \mu}\|_{TV}.
\end{align*}
Combining the last three expressions 
yields the result.
\end{proof}

{Recall that in the above we 
have ignored errors from 
finite sampling. We now 
discuss the implications of those 
errors. In the original Milestoning,  
${\tilde \mu}$ is the canonical 
Gibbs distribution on the milestones. 
In that setting, we can typically 
sample independently from ${\tilde \mu}$ 
on the milestones. Thus, the central 
limit theorem implies that 
the} {\em true} {error in the Milestoning 
approximation of ${\mathbb E}^\rho[\tau_P]$ 
is bounded above with high probability 
by the right 
hand side of}~\eqref{error_M} 
{plus a constant times $1/\sqrt{N}$, 
where $N$ is the number of samples. An analogous argument 
applies to Exact Milestoning if 
$\tilde \mu$ is sampled by simple power 
iteration. For our coarse-grained 
version of power iteration in 
Algorithm~\ref{alg:exact-milestoning}, however, 
we obtain samples 
of ${\tilde \mu}$ which are 
not independent, and thus a 
more detailed analysis would 
be required to determine the 
additional error from  
finite sampling.}

We do not analyze the time discretization
error $\phi(\delta t)$ and instead
refer the reader to~\cite{Gobet2010}
and references therein.
Here we simply remark that, if $(X_t)$ is a
diffusion process, then under
certain smoothness assumptions
on the drift and diffusion
coefficients of $(X_t)$ and
on $M$, we have
$\phi(\delta t) = \theta(\sqrt{\delta t})$
when $(X_{n \delta t})$ is the
standard Euler time
discretization with time step $\delta t$.
See~\cite{Gobet2004} for details and proof. 
See also~\cite{Bouchard2013,Higham2013} for numerical schemes that mitigate  
time discretization error in the 
MFPTs.

\section{Illustrative examples}\label{sec:examples}

In this section we discuss two examples of Milestoning to illustrate the method.

We consider the solution, $(X_t)$, of the Brownian dynamics equation,
\begin{equation}
  \label{eq:brownian-dynamics-sde}
  \left\{
    \begin{aligned}
      &\d X_t = -\nabla U(X_t) \, \d t + \sqrt{2 \beta^{-1}} \, \d B_t, \\
      &X_0 \sim \rho
    \end{aligned}
     \right.
\end{equation}
where $U \colon \Omega \to \mathbb{R}$ is a smooth potential energy function, $\beta > 0$ is the inverse temperature, and $(B_t)$ is a standard Brownian motion.

\subsection{M\"uller-Brown potential}
\label{sec:muller-brown}

We begin with a system characterized by the M\"uller-Brown potential~\cite{Muller1979}.
The energy function $U \colon \Omega \subset \mathbb{R}^2 \to \mathbb{R}$ is given by the formula (see also the corresponding energy landscape in Figure~\ref{fig:muller-potential})
\begin{align*}
  U(x_1, x_2) &=
  -200\,{\mathrm{e}^{- \left( {x_1}-1 \right) ^{2}-10\,{{x_2}}^{2}}}
  -100\,{\mathrm{e}^{-{{x_1}}^{2}-10\, \left( {x_2}-\frac{1}{2} \right) ^{2}}} \\
  & -170\,{\mathrm{e}^{-\frac{13}{2}\, \left( {x_1}+\frac{1}{2} \right) ^{2}+11\, \left( {x_1}+\frac{1}{2} \right)  \left( {x_2}-\frac{3}{2} \right) -\frac{13}{2}\, \left( {x_2}-\frac{3}{2} \right) ^{2}}} \\
  &+15\,{\mathrm{e}^{{\frac {7}{10}}\, \left( {x_1}+1 \right) ^{2}+\frac{3}{5}\, \left( {x_1}+1 \right)  \left( {x_2}-1 \right) +{\frac {7}{10}}\, \left( {x_2}-1 \right) ^{2}}}.
\end{align*}
This system is a commonly used benchmark for numerical methods for obtaining reaction rates.
\begin{figure}[ht]
  \centering
  \includegraphics[width=0.7\linewidth]{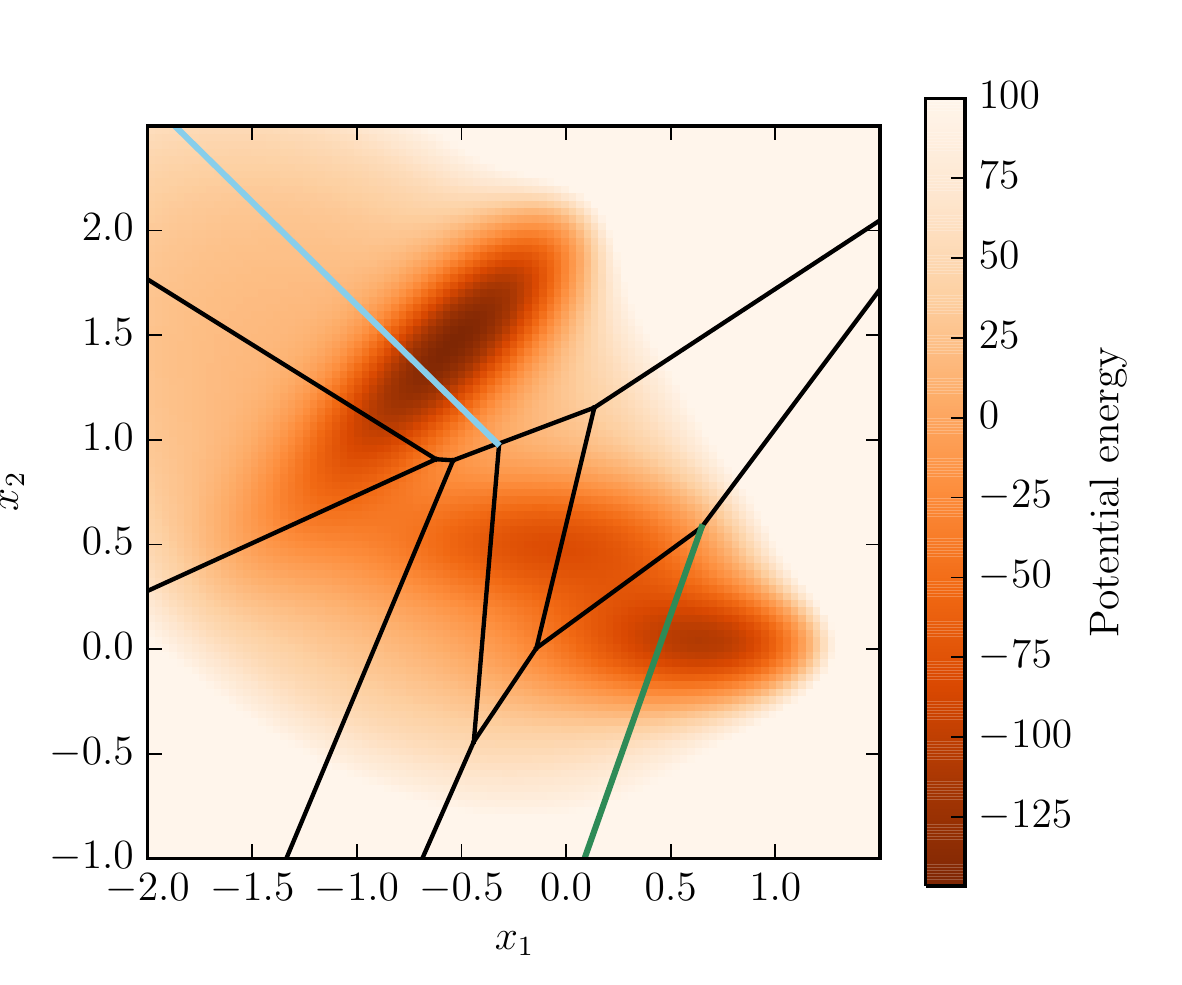}
  \caption{Graph of the M\"uller-Brown potential energy function.  The milestones are shown as the overlaid line segments.}
  \label{fig:muller-potential}
\end{figure}
\begin{figure}[ht]
  \centering
  \includegraphics[width=0.6\linewidth]{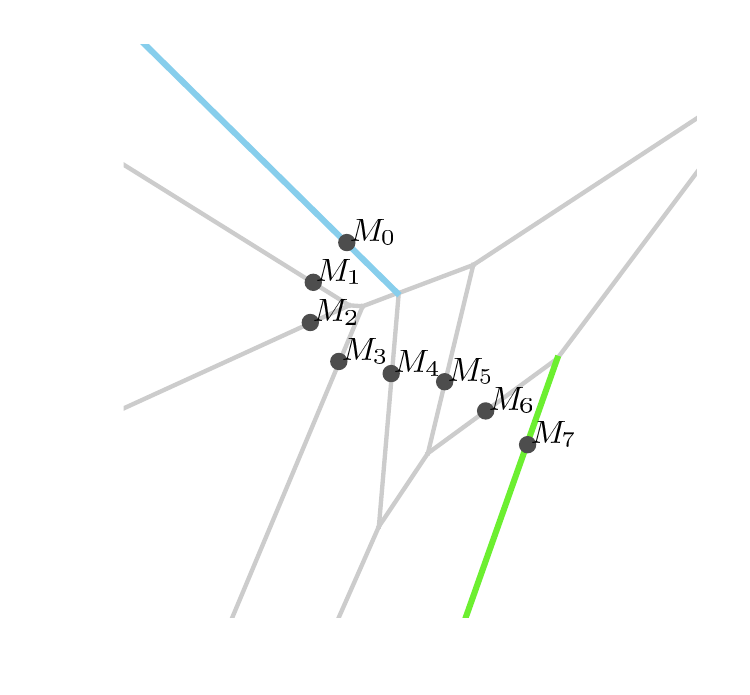}
  \caption{Milestones represented as line segments.  Some of the milestones are labeled and the reactant is colored in green while the product is shown in blue.}
  \label{fig:muller-potential-annotated}
\end{figure}

We chose to partition $\Omega$ using a Voronoi tessellation (displayed in Figures~\ref{fig:muller-potential} and~\ref{fig:muller-potential-annotated}) generated from a set of points gathered by the method of locally updated planes~\cite{Ulitsky1990}.
However, any other set of points could have been chosen (as we shall discuss in the next example).
Figure~\ref{fig:muller-potential} also shows our choice of reactant and product milestones.

For the numerical experiments to be detailed below, we solve the stochastic differential equation in~\eqref{eq:brownian-dynamics-sde} using the Euler-Maruyama scheme~\cite{Milstein2004} with a time step length $\Delta t = 10^{-5}$ at a temperature determined by $\beta^{-1} = 5$.
We use the number of force evaluations as a measure of the computational cost of our methods and we note that the Euler-Maruyama method requires one force evaluation per time step.

We compared two types of experiments that we now describe.
The first experiment consists of running Brownian dynamics trajectories started at the reactant milestone until they reach the product milestone.
As soon as a trajectory reaches the product, we initiate a new trajectory from the reactant milestone and so on.
We refer to these as \emph{long} trajectories.
In the second experiment we run Exact Milestoning starting the first iteration with exactly one phase space point at each of the milestones along the reaction path.
Next, we run ten \emph{short} trajectories per milestone per iteration.
These {short} trajectories start at each milestone and stop whenever they reach any neighboring milestone, as described in Section~\ref{sec:algorithm}.

Despite allowing the long trajectories to go on for approximately $2.5 \times 10^9$ force evaluations, only seven reach the product milestone.
This leads to a poor approximation of the mean first passage time.
By contrast, running the Exact Milestoning method for approximately $2 \times 10^9$ force evaluations, we obtain good estimates of the stationary distribution $\mu$ and the local mean first passage times.

The values of $\mu(M_i)$ are displayed in Figures~\ref{fig:muller-flux} and~\ref{fig:q-muller}.
The empirical distributions corresponding to $\mu$ on some of the milestones are shown in Figure~\ref{fig:empirical-distributions}.
\begin{figure}[!ht]
  \centering
  \subfigure{\includegraphics[width=0.49\linewidth]{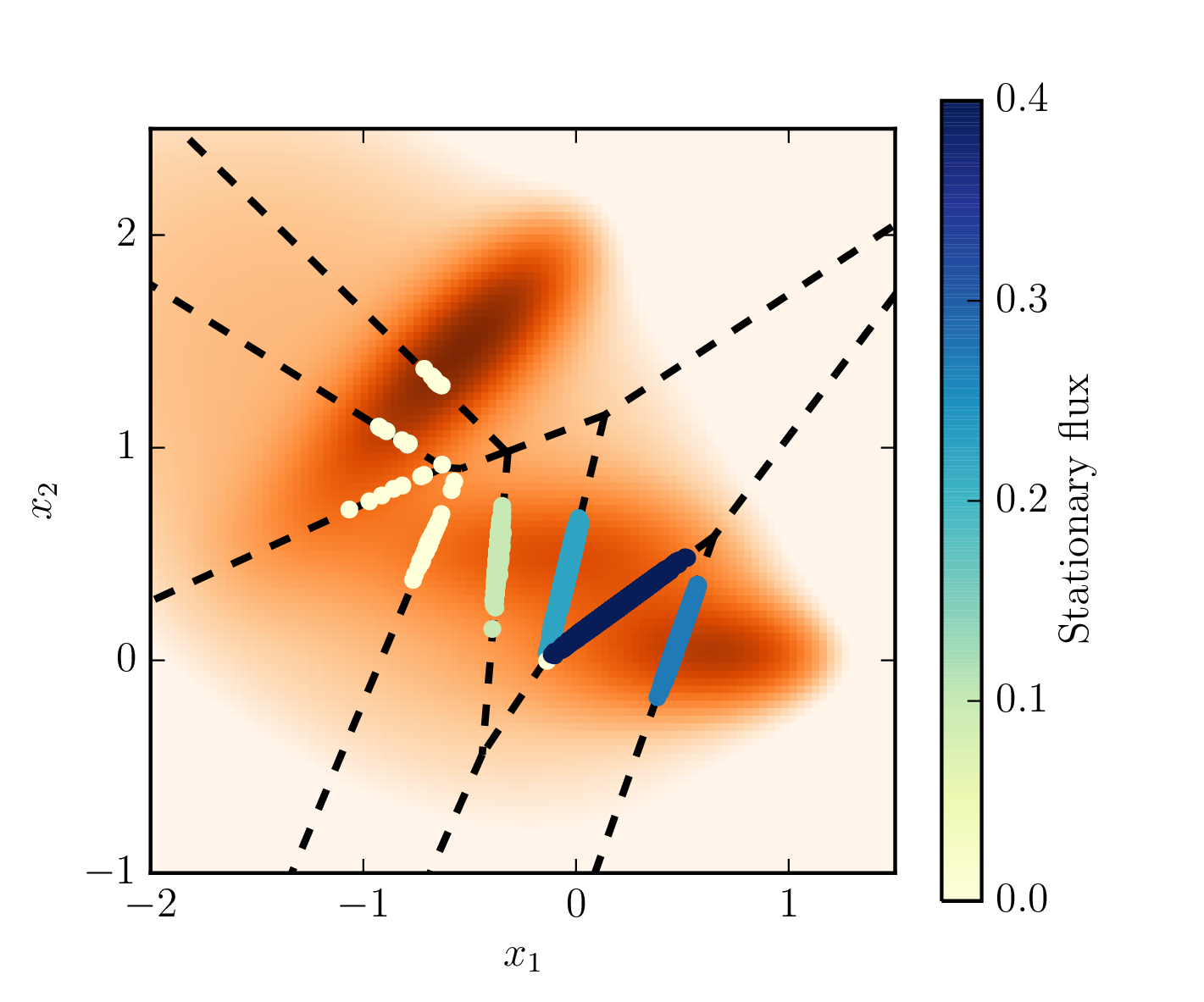}}
  \subfigure{\includegraphics[width=0.49\linewidth]{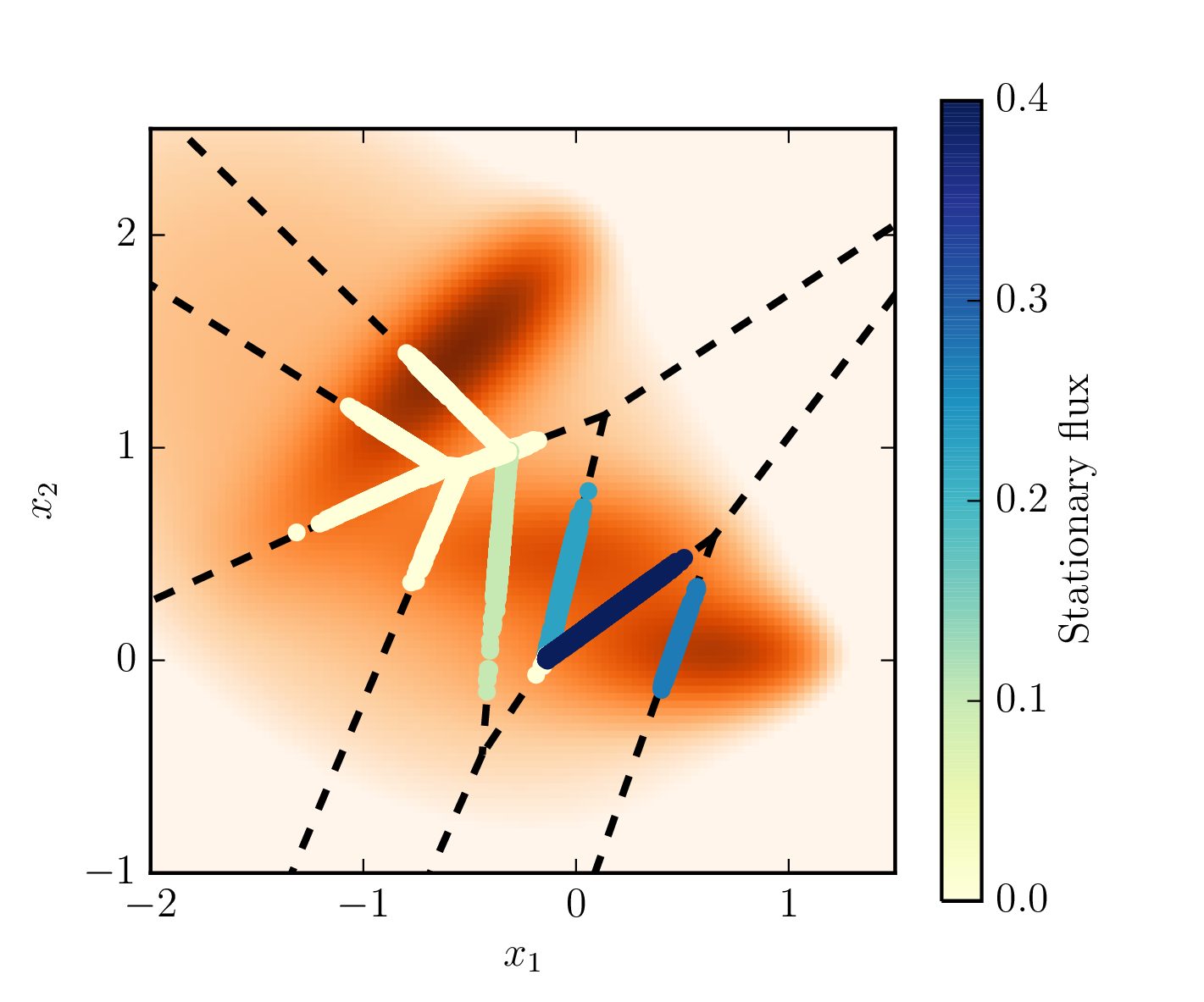}}
  \caption{Phase space points in the empirical distributions of the stationary flux $\mu$ for two types of simulations: long trajectories using straight-forward Brownian dynamics (left) and Exact Milestoning (right).  Despite the fact that the two types of simulations involved comparable amounts of computational effort, we see that the sampling in Exact Milestoning is much more exhaustive.}
  \label{fig:muller-flux}
\end{figure}
\begin{figure}[!ht]
  \centering
  \includegraphics[width=0.75\linewidth]{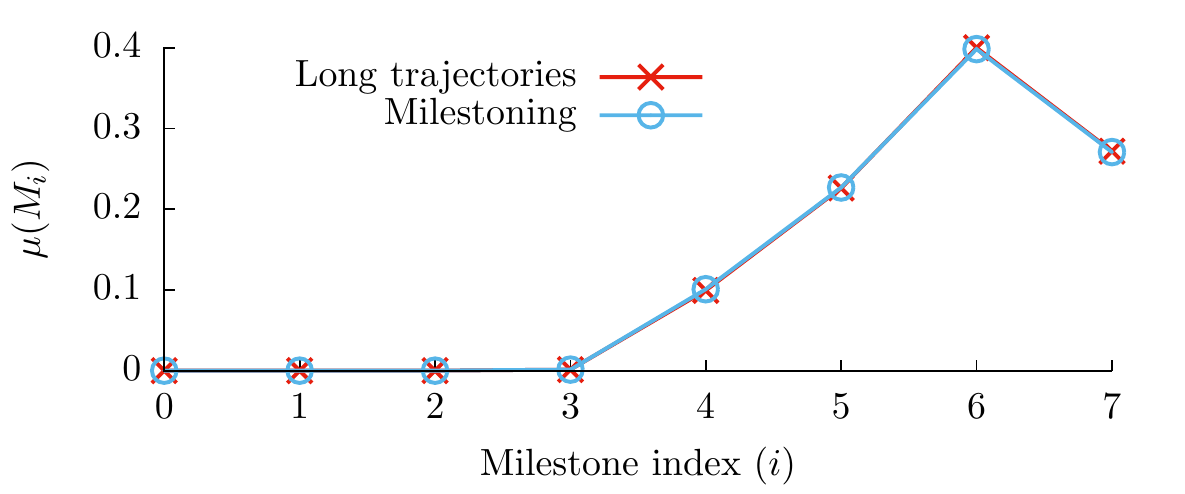}
  \caption{Values of $\mu(M_i)$ at some of the milestones in the M\"{u}ller-Brown potential.
    The values correspond to the long trajectories (in red) and to Exact Milestoning (in blue), as discussed in Section~\ref{sec:muller-brown}.
    Not shown are the milestones other than $M_i$ for $i = 0, \dotsc 7$, where the sampling obtained from the long trajectories is insufficient for comparison.
  }
  \label{fig:q-muller}
\end{figure}
\begin{figure}[!ht]
  \centering
  \includegraphics[width=0.75\linewidth]{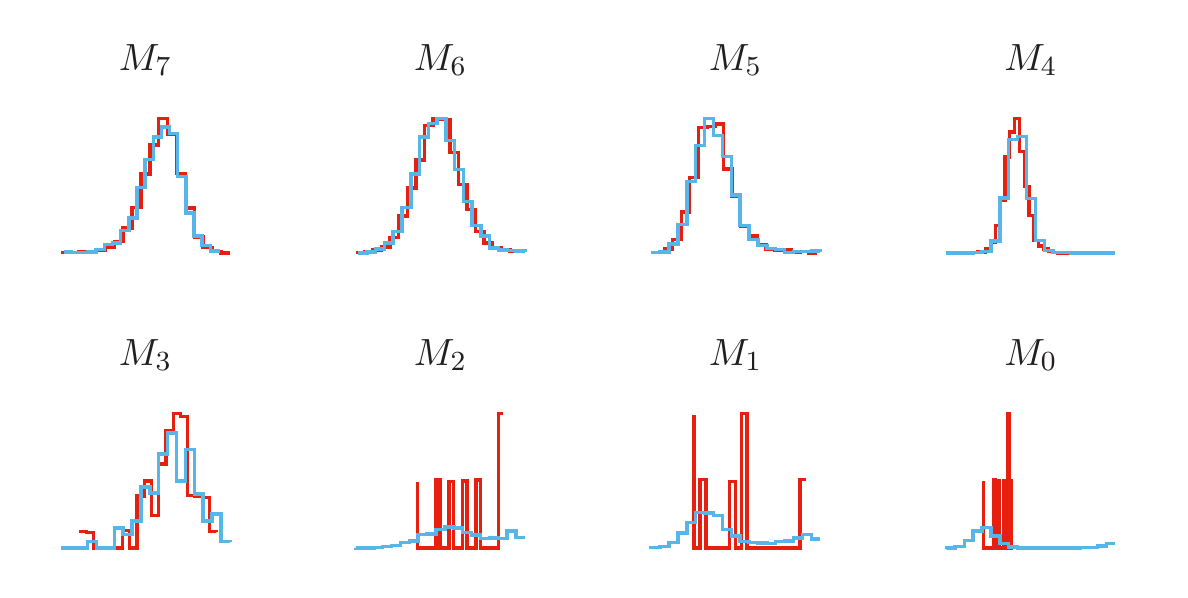}
  \caption{Empirical distributions of the stationary flux obtained by long trajectories (in red) and Exact Milestoning (in blue) corresponding to the system in Section~\ref{sec:muller-brown}. Notice that the sampling of the long trajectories is very sparse at the milestones close to the product.}
  \label{fig:empirical-distributions}
\end{figure}

Figures~\ref{fig:muller-flux} and~\ref{fig:q-muller} illustrate the non-equilibrium nature of Exact Milestoning.
The stationary distribution that we compute differs noticeably from the equilibrium (canonical) distribution.
Recall from Figures~\ref{fig:muller-potential} and~\ref{fig:muller-potential-annotated} that the reactant milestone, $M_7$, is located at the lower-right minimum while the product milestone, $M_0$, is at the global minimum in the upper-left side of the graph.
With this in mind, we see that trajectories initiated at $M_7$ arrive at the intermediate minimum located close to the center of the graph and many of those trajectories return to the lower minimum, crossing $M_7$ again.
This results in high values of $\mu(M_6)$ at the transition state, while the density at $M_0$ (the global minimum) is significantly lower.
Equilibrium considerations, which are inappropriate here, would suggest that most of the stationary density (and the stationary probability) is concentrated at the global minimum and that the weight at the transition state would be small.

\subsection{Rough energy landscape}
\label{sec:rough-landscape}

In this case, we present an example of Milestoning on the torus $\Omega = \mathbb{R}^2 / \, \mathbb{Z}^2$.
For our computations, we consider a uniformly spaced mesh of milestones with fixed product
and reactant sets $P,R$; see Figure~\ref{fig:omega-milestones}.
We model a rough energy landscape by a potential energy function of the form:
\begin{equation}
  \label{eq:energy}
  U(x_1, x_2) = \operatorname{Re} \sum_{k_1 = -N}^{N} \sum_{k_2 = -N}^{N} z_{k_1, k_2} \, \mathrm{e}^{2 \pi \mathrm{i} (k_1 x_1 + \, k_2 x_2)}
\end{equation}
where $\operatorname{Re}$ denotes the real part of a complex number and $N \in \mathbb{N}$ is a constant that tunes the ruggedness of the potential.
Each coefficient $z_{k_1, k_2} = a_{k_1, k_2} + \mathrm{i} \, b_{k_1, k_2} \in \mathbb{C}$ is determined by the random variables $a_{k_1, k_2}$ and $b_{k_1, k_2}$, which are distributed according to
\begin{equation*}
  \begin{cases}
    c, & \text{with probability $\tfrac{1}{2}$}, \\
    0, & \text{with probability $\tfrac{1}{2}$},
  \end{cases}
\end{equation*}
with $c$ itself being a uniform random variable in the interval $(-1, 1)$.
Since $N$ is fixed, a particular realization of the coefficients specified above completely determines the potential energy function $U$.
The graph of the canonical density of a potential energy of the form discussed above is shown in Figure~\ref{fig:rough-energy-landscape}.
\begin{figure}[ht]
  \centering
  \includegraphics[width=\linewidth]{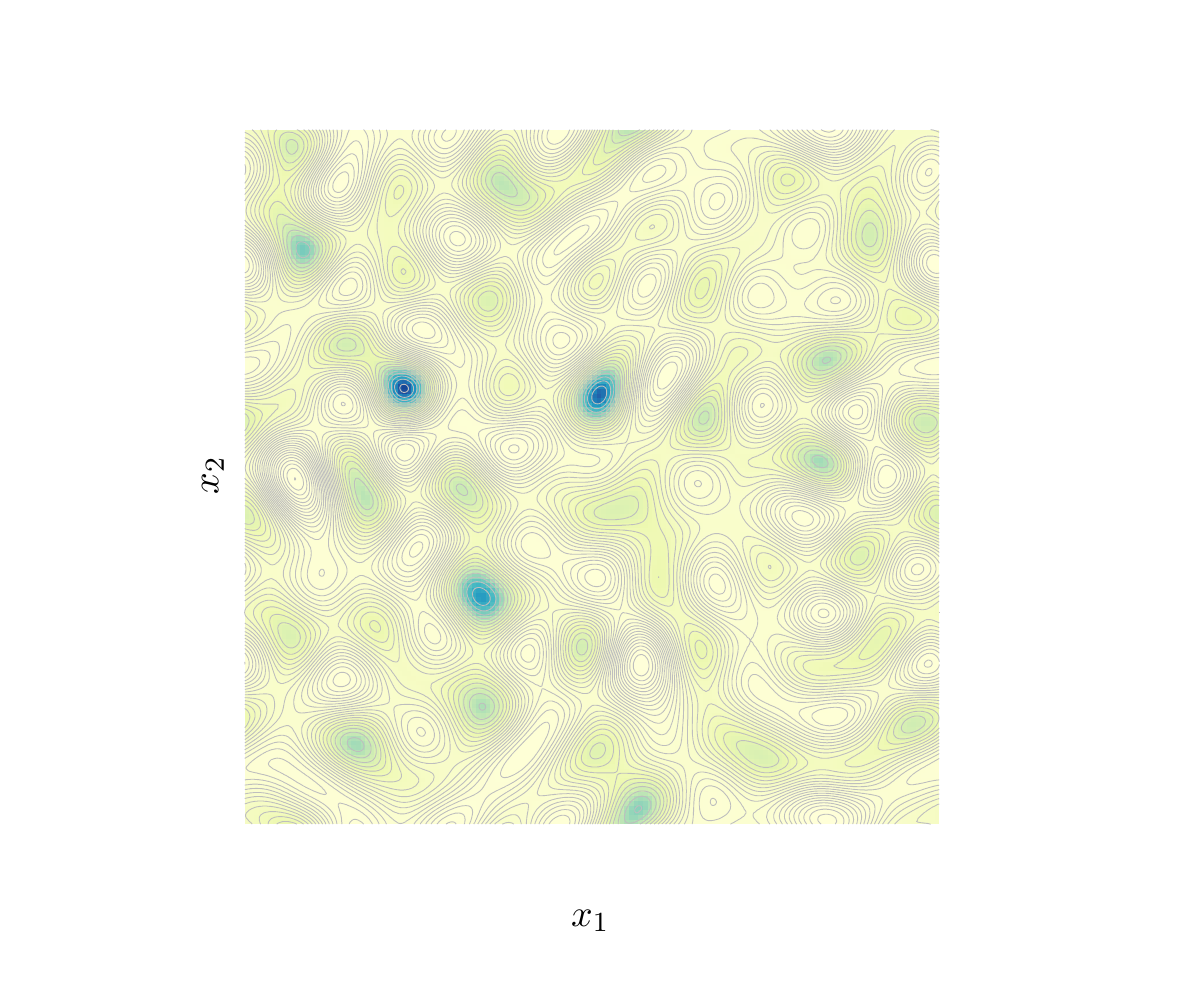}
  \caption{Graph of the density of the canonical distribution corresponding to a rough energy landscape with $N = 7$ at temperature $\beta^{-1} = 1$.}
  \label{fig:rough-energy-landscape}
\end{figure}
Notice that this class of energy functions generalizes the model for rough landscapes introduced by~\cite{Zwanzig1988} and that similar potential energy functions have been used to model Wigner glasses~\cite{Akhanjee2007}.

We carry out Exact Milestoning in this example by solving boundary value problems, as described in~\cite{Bello-Rivas2015b}.
The resulting stationary density obtained after convergence is shown in Figure~\ref{fig:rough-stationary-density}.
\begin{figure}[ht]
  \centering
  \includegraphics[width=0.5\linewidth]{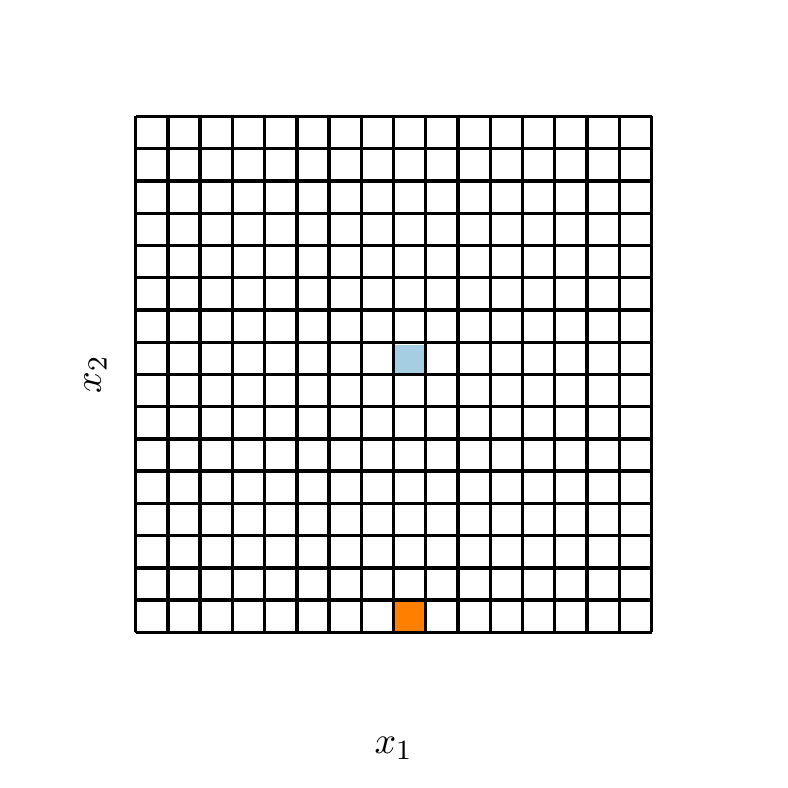}
  \caption{Diagram showing the reactant state (orange square) and the product state (blue square) within the set of all milestones for the example in Section~\ref{sec:rough-landscape}. Each milestone 
  is an edge of one of the small squares 
  in the diagram. (The total number of milestones 
  has been decreased to enhance visibility.)}
  \label{fig:omega-milestones}
\end{figure}
\begin{figure}[ht]
  \centering
  \includegraphics[width=\linewidth]{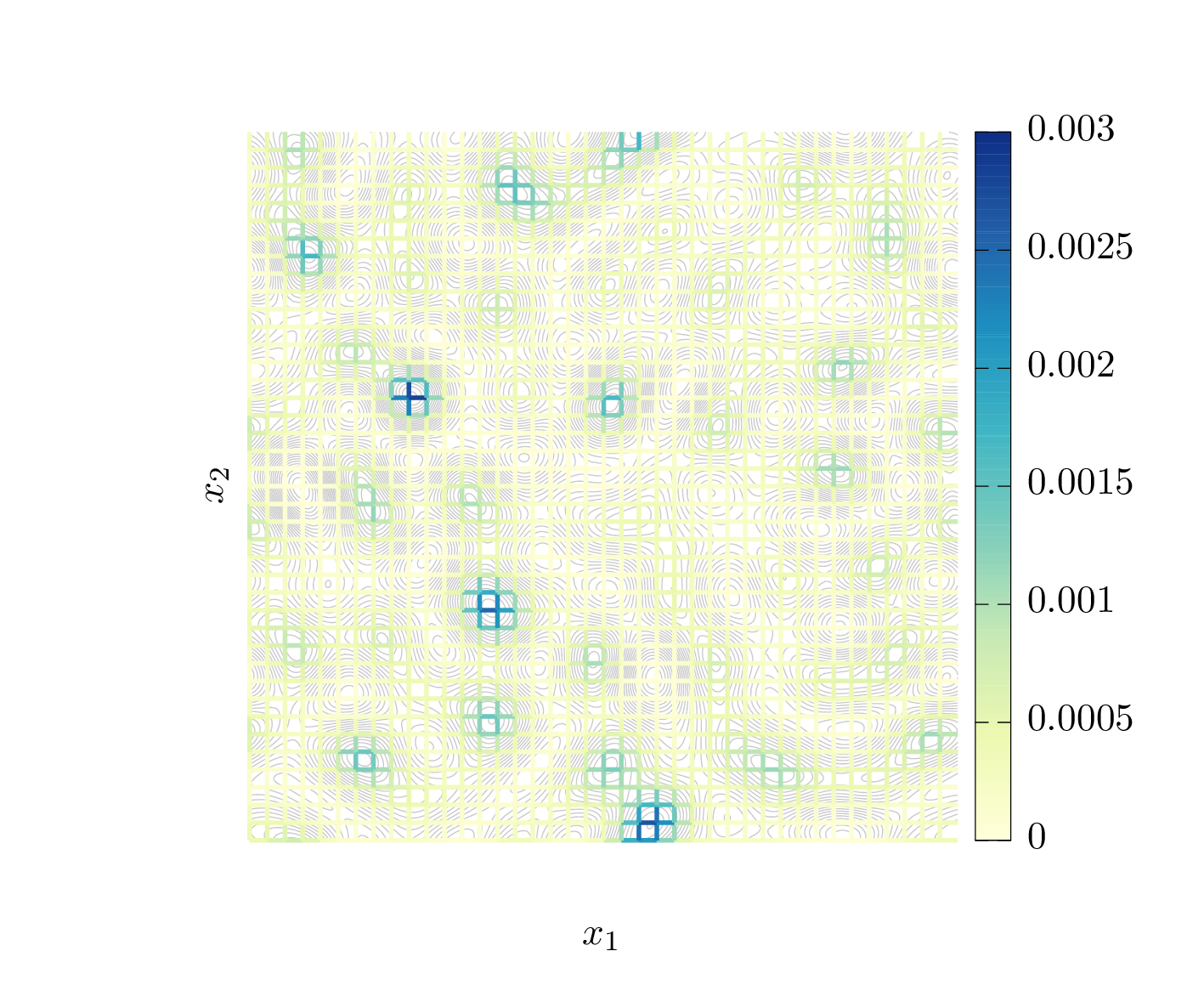}
  \caption{Stationary density $\mu$ on the rough energy landscape of Section~\ref{sec:rough-landscape}.
  The contour lines are the level sets of $U$.
  There are $2\times 40 \times 40$ total milestones (shown as the segments in the overlaid grid).}
  \label{fig:rough-stationary-density}
\end{figure}

It is interesting to note that it has been argued~\cite{Vanden-Eijnden2008} that an optimal choice of milestones would consist of using the level sets (also called \emph{isocommittors}) of the committor function.
These surfaces (see Figure~\ref{fig:isocommittors}) are typically hard to compute in practice, which makes the use of Exact Milestoning more appealing, as its results are independent of how the milestones are set up.
\begin{figure}[ht]
  \centering
  \includegraphics[width=0.6\linewidth]{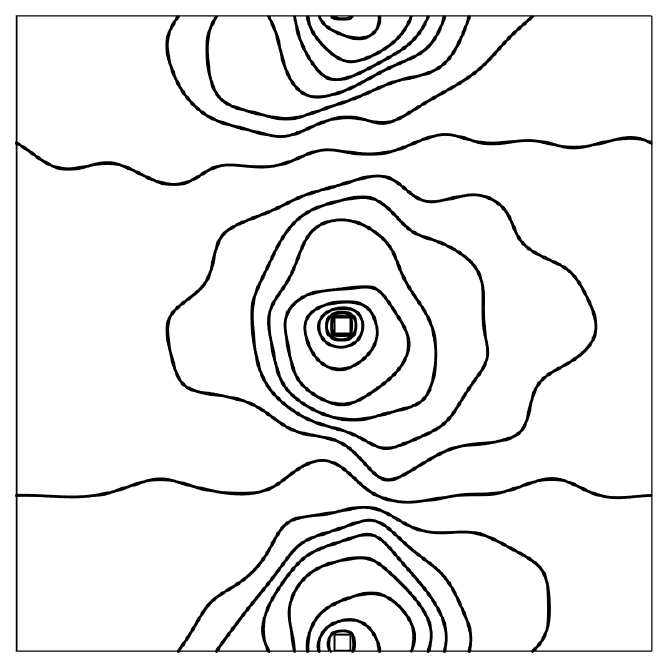}
  \caption{Isocommittor surfaces for the rough energy landscape discussed in Section~\ref{sec:rough-landscape}.}
  \label{fig:isocommittors}
\end{figure}

\section{Conclusions}

The main goal of this manuscript is to present a rigorous mathematical derivation, based on probability theory, of Exact Milestoning. While the theory of Exact Milestoning and accompanying numerical examples were discussed elsewhere~\cite{Bello-Rivas2015,Bello-Rivas2015b}, the mathematical formulation in the earlier paper was not as rigorous as in this manuscript.  Once this formulation is established, it opens the way for further communication between chemical physicists and mathematicians, and it bridges the gap between the communities for further development of an important tool for computer simulation.

Exact Milestoning belongs to a class of enhanced sampling methods for the calculation of kinetics. Most closely related approaches to Milestoning are the Non-Equilibrium Umbrella sampling~\cite{Warmflash2007} and Trajectory Tilting~\cite{Vanden-Eijnden2009}.
The way in which trajectories are sampled is similar in all of these methods; however, the theoretical frameworks are different.
For example, Milestoning allows the calculation of all the moments of the first passage time (FPT) distribution~\cite{Bello-Rivas2015}, and hence better estimates of the FPT can be constructed, a result that was not reported for other methods.

It is not necessary in Milestoning to establish or rely on metastability to estimate the average transition time. From this perspective the method is different from another exact approach ---Transition Path Sampling~\cite{Dellago2002}--- that exploits the short duration of rare trajectories between metastable states. Exact Milestoning makes the sampled trajectories short by sampling trajectory fragments between boundaries of phase space cells or milestones.  The statistics of short trajectories between milestones make it possible to investigate wide ranges of types of energy landscapes, which may be corrugated or not, as illustrated in the two examples in this manuscript. We have shown in the present manuscript that Exact Milestoning is both accurate and highly efficient.

It is important to emphasize that the choice of the milestones in Exact Milestoning is arbitrary from a formal viewpoint. Efficiency considerations suggest that it is beneficial to select them following two criteria: (i) the milestones should be sufficiently close in the kinetic sense to make the trajectories short, and (ii) milestones should be chosen to make the number of iterations as small as possible. For example, the number of iterations can be small if the system is close to equilibrium and the milestones are expressed in a space of slow variables. Then an initial choice of the canonical distribution is quite accurate.

We comment that the method of Milestoning that was broadly used in the past ({e.g.,} \cite{Kirmizialtin2011}) is approximate and assumes local equilibrium within the milestones. While corrections and further refinements were proposed~\cite{Majek2010,Hawk2011,Hawk2013}, these approximations cannot be made exact and are similar in spirit to the local equilibrium and lag time approximations of Markov State Models~\cite{Chodera2007}. Nevertheless, these approximations can be accurate with a proper choice of coarse variables. These types of approximations are very useful as the system grows in complexity and size and exact calculations become prohibitively expensive. Milestoning made it possible to investigate kinetics of enzymes~\cite{Kirmizialtin2015} and transport through membranes~\cite{Cardenas2015} in agreement with experimental observations. These are systems of tens to hundreds of thousands of particles and time scales of milliseconds. It will be of considerable interest to re-evaluate these approximations for large systems with the method of Exact Milestoning. As the efficiency of Exact Milestoning increases with faster hardware, we are breaking scale barriers that were not accessible before to atomically detailed simulations.

\section{Acknowledgments}
\label{sec:acknowledgments}

This research was supported in part by a grant from the NIH GM59796
and from the Welch Foundation Grant No. F-1783. 
D. Aristoff gratefully acknowledges support from 
the National Science Foundation via the award
NSF-DMS-1522398.

\end{document}